\newcommand{\GraphNodeDistance}{0.8 cm}
\newcommand{\GraphNodeSize}{2 pt}
\newcommand{\GraphInnerSep}{1 pt}
\newcommand{\GraphLineWidth}{0.6 pt}
\newcommand{\B}{\mathcal{B}}
\newcommand{\T}{\mathcal{T}}
\newcommand{\W}{\mathscr{W}}
\newcommand{\w}{\mathfrak{W}}
\newcommand{\C}{\mathscr{C}}
\newtheorem{theorem}{Theorem}[section]
\newtheorem{lemma}[theorem]{Lemma}
\newtheorem{proposition}[theorem]{Proposition}
\newtheorem{corollary}[theorem]{Corollary}
\theoremstyle{definition}
\newtheorem{definition}[theorem]{Definition}
\newtheorem{remark}[theorem]{Remark}
\numberwithin{equation}{section}
\begin{document}

\author[M. Alinejad and K. Khashyarmanesh]
{ M.~Alinejad and K.~Khashyarmanesh}

\address{Department of Pure Mathematics, Ferdowsi University, Mashhad, Iran.}
\email{\textcolor[rgb]{0.00,0.00,0.84}{mohsenalinejad96@gmail.com}}
\email{\textcolor[rgb]{0.00,0.00,0.84}{khashyar@ipm.ir}}

\title[Counting short cycles of (c,d)-regular bipartite graphs]
{Counting short cycles of (c,d)-regular bipartite graphs}
\keywords{$(c,d)$-regular graph, bipartitel graph, closed walks, cycle-free walk.}
\maketitle

\begin{abstract}
Recently, working on the Tanner graph which represents a low density parity check (LDPC) code 
becomes an interesting research subject. Finding the number of short 
cycles of Tanner graphs motivated Blake and Lin to investigate the multiplicity of cycles of length 
girth in bi-regular bipartite graphs, by using the spectrum and degree distribution of the graph. Although there were many algorithms 
to find the number of cycles, they preferred to investigate in a computational way. Dehghan and Banihashemi counted 
the number of cycles of length $g+2$ and $g+4,$ where $G$ is a bi-regular bipartite graph and 
$g$ is the length of the girth $G.$ But they just proposed a descriptive technique to compute the multiplicity of cycles 
of length less than $2g$ for bi-regular bipartite graphs. In this paper, 
we find the number of cycles of length less than $2g$ by using spectrum and degree distribution of  bi-regular bipartite graphs such 
that the formula depends only on the partitions of positive integers and the number of closed cycle-free walks from 
a variable (resp. check) vertex in $ \B_{c,d}$ and
$\T_{c,d}$ $($resp. $\T_{d,c}),$ which are known.
\keywords{}
\end{abstract}

\section{Introduction}
Low density parity check (LDPC) codes are linear codes with error performance near to the Shannon limit can 
represent as Tanner graphs, which proposed the first time by Michael Tanner. It is shown that the 
structure of Tanner graphs, in particular, distribution and number of short cycles, affect the 
error efficiency of LDPC codes (see \cite{hu, karimi, tanner, xiao}).
Performance of the LDPC codes persuade many researchers to investigate 
on cycles of Tanner graphs. In \cite{chugg, mao}, are shown that
for finding LDPC codes with a good performance, we should focus on the graphs that the number of short cycles is not so many. 
It is proved that regularity of a graph has a significant impact on LDPC codes  \cite{richard}. Dehghan and 
Banihashemi in \cite{bani}, studied cycle distribution of random bipartite graphs.

Counting the number of cycles in a general graph is known to be NP-hrad \cite{flum}. The complexity of the problem led the researcher 
to use recursion methods and algorithms to compute the number of cycles in bipartite graphs. For instance, in \cite{hal}, 
Halford and Chugg 
proposed a recursive algorithm for counting the cycles of length at most $g+6.$
In \cite{kar}, Karimi and Banihashemi presented 
an algorithm to compute the number of cycles of length less than $g.$
Recently, Blake and Lin suggested a new way, independent from algorithms and complicated methods, to compute the number of cycles 
by using spectrum and degree distribution of bipartite graphs \cite{blake}.

For a given graph G, the \textit{adjacency matrix} $A = [a_{ij} ]$ of $G$ is defined such that
$a_{ij} = 1,$ if $ij \in E(G)$, and $a_{ij} = 0,$ if $ij \notin
E(G).$ The \textit{spectrum} of a graph $G,$ denoted by $\lbrace \lambda_i \rbrace ,$ is the multiset of 
 eigenvalues of adjacency matrix $A.$ Since there exists a 
close relationship between the number of walks of arbitrary length and powers of matrix $A,$ the spectrum of 
$G$ is more useful to find the number of cycles. 
Blake and Lin in \cite{blake} found the 
number of short cycles of length $g$ in bi-regular bipartite graphs without using complicated algorithms. They were hoping 
this new method guided the 
researchers to find the number of cycles of length greater than $g.$ In \cite{dehghan}, Dehghan and Banihashemi  
determined the exact 
number of cycles of length $g+2$ and $g+4$ in a bi-regular bipartite graph. 
In addition, 
by contradiction examples, they showed that the spectrum and degree distribution conditions are not enough to find
the number of 
cycles of length $i$ for a bi-regular 
bipartite graph, where $i \geq 2g.$
They also mentioned some facts for 
the number of cycles of length less than $2g,$ but they did not proposed a formula to compute the number of cycles. By using 
the eigenvalues and degree sequence of bi-regular bipartite graphs, we present a new way to enumerate the number of cycles of length 
less than $2g$ in bi-regular bipartite graphs. 

In Section $2$ of the paper, we present some definitions and preliminaries which we need through this paper. In Section 
$3,$ by using the partitions of positive integer numbers, we find the number of closed walks with a cycle in 
bi-regular bipartite graphs in which initial vertex is 
in cycle. In Section $4,$ similar to section $3,$ we investigate the number of closed walks that consist a cycle and initial 
vertex is out of cycle. Finally, from the results of sections $3$ and $4,$ we determine the number of closed walks with cycle.
Since the number of closed cycle-free walks in bi-regular bipartite graphs 
specified in \cite{blake}, we can express the number of cycles of length less 
than $2g.$   

\section{Preliminaries and Notations}
For any graph $G$, we denote the set of all vertices and edges of $G$ by $V(G)$ and $E(G)$, respectively. 
For two vertices $u, v \in V(G),$ we denote $ u \sim v$ or $ uv $ for brevity, if $u$ and $v$ are adjacent.
The \textit{degree} of a vertex $v\in V(G)$, denoted by $d(v)$, 
is the number of adjacent vertices of $v$. 
A \textit{walk} $\W$ is a sequence of the vertices $v_1, v_2, \ldots, v_{k+1}$ such that $v_{j} v_{j+1} \in E(G),$ for 
$1 \leq j \leq k.$ In this case, $v_j$ is called the $j$-th vertex of $\W$ and
the length of $\W$ is defined as the number of edges of $\W$ and is denoted by $\ell(\W).$ 
We call $v_1$ and $v_{k+1}$ the \textit{initial} and \textit{terminal} vertex of $\W,$ respectively.
For integers $j$ and $s,$ a walk $\W'=v_{j}, \ldots, v_{v_{j+s}}$ is a \textit{subwalk} of $\W
=v_1, v_2, \ldots, v_{k+1},$ if 
$1 \leq j < k+1$ and $1 < j+s \leq k+1.$ A walk is called a \textit{closed walk} if the initial and terminal vertex are the same. 
A closed \textit{cycle-free} walk is a closed walk with no cycles. A closed walk 
$\W$ which is not a cycle
 is called a \textit{closed walk with cycle}, if the 
induced subgraph on the edges of the closed walk has at least one cycle. For brevity, we denote the 
closed walk with cycle by CWWC. 
 If the vertices of a walk are distinct,
 then 
a walk or closed walk is called \textit{path} and \textit{cycle}, respectively.
 For $u,v\in V(G)$, 
$d(u,v)$ denotes the length of the shortest path between $u$ and
$v$. If there is no path between $u$ and $v$, then we define $d(u,v)=\infty$. For a graph $G,$ length of shortest cycle is called 
\textit{girth}, and is denoted 
by $g.$ For $j \geq g,$ the number of cycles of length $j$ is denoted by $N_{j}.$ 

Graph $G$ is called \textit{bipartite}, if $V(G)$ can be partitioned into two sets $U$ and $V$ such that 
if $uv \in E(G),$ then $u$ and $v$ belong to different sets. A graph $G$ is called nonbipartite, if $G$ is not bipartite.
 If the degree of vertices $U$ and $V$ are $c$ and $d,$ respectively, 
then $G$ is called $(c,d)$-regular bipartite graph, and is denoted by $\B_{c,d}$. In this case, 
we assume that
$|U|=n$ and $|V|=m.$ For a bipartite graph $G=U \cup V,$ the $m \times n$\textit{ parity check } matrix $H(G)=[h_{ij}]$ 
defined in which $h_{ij}=1,$ if $ij \in E(G), $ and $h_{ij}=0,$ otherwise. Clearly, $H(G)$ constructs a linear code $C(G).$ 
In this case, $G$ is called the \textit{ Tanner graph} of $C(G).$ 
We denote $b_{c,2k} $ and $a_{c,2k}$ $ ($resp. $b_{d,2k}$ and $a_{d,2k})$ as the number of closed cycle-free and 
return once closed cycle-free walks of length $2k$ with initial vertex of degree $c $ in $\B_{c,d}($resp. $d)$. For 
$k=0,$ we assume that $b_{c,0}=b_{d,0}=1.$

Graph $G$ is \textit{connected}, if there is a path between each two vertices of $G.$ A connected graph with no cycle is called 
\textit{tree}. For graph $\B_{c,d}$, the \textit{related tree} of $\B_{c,d},$ denoted by $\T_{c,d} $ $($resp. $ 
\T_{d,c}),$
is defined as the rooted tree with root vertex of degree $c-1$ $($resp. $d-1)$ and vertices of consecutive levels have alternating degrees 
$d$ $($resp. $c)$ and $c $ $($resp. $d).$ 
 In addition, we denote 
$t_{c,2k} $ and $s_{c,2k}$ $ ($resp. $t_{d,2k}$ and $s_{d,2k})$ as the number of closed cycle-free and 
return once closed cycle-free walks of length $2k$ in $\T_{c,d}$ $($resp. $\T_{d,c})$ with initial vertex of degree $c $ $($resp. $d)$. 
For $k=0,$ we assume that $t_{c,0}=t_{d,0}=1.$

Clearly, the adjacency matrix $A$ is real and symmetric, and so the eigenvalues of $G$ are real. 
Moreover, it is known that 
if $\lbrace \lambda_i \rbrace$ is the spectrum of $G,$ then $\lbrace \lambda^k_i \rbrace$ is the eigenvalue of 
$A^k,$ for a positive integer 
$k.$ 
For a matrix $A,$ $tr(A)$ is defined as the summation of diagonal entries of $A.$ 
The following proposition play an important role in the enumerating the number of cycles.
\begin{proposition} \cite [Proposition $1.3.4$] {cvet}  \label{s11}
If $A$ is the adjacency matrix of a graph, then $(i,j)$-entry $a^k_{ij}$ of the matrix $A^{k}$ is equal to the 
number of walks of length $k$ that start at vertex $i$ and end at vertex $j.$
\end{proposition}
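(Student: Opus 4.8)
The plan is to establish the statement by induction on the length $k$ of the walks.

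For the base case I would take $k=1$: here $A^{1}=A$, so its $(i,j)$-entry is $a_{ij}$, which equals $1$ exactly when $ij\in E(G)$ and $0$ otherwise. On the other hand a walk of length $1$ from $i$ to $j$ is precisely a single edge joining $i$ and $j$, so there is one such walk when $ij\in E(G)$ and none otherwise; the two quantities therefore agree when $k=1$. (If one prefers to start at $k=0$, use the convention $A^{0}=I$ and note that the only walk of length $0$ from $i$ to $j$ is the trivial one, which exists iff $i=j$.)

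For the inductive step, assume the claim holds for some $k\geq 1$, so that $a^{k}_{i\ell}$ counts the walks of length $k$ from $i$ to $\ell$ for every pair of vertices. From $A^{k+1}=A^{k}A$ and the definition of matrix multiplication,
\[
a^{k+1}_{ij}=\sum_{\ell\in V(G)} a^{k}_{i\ell}\, a_{\ell j}.
\]
The key point is a bijective decomposition: every walk $v_{1},v_{2},\ldots,v_{k+2}$ of length $k+1$ with $v_{1}=i$ and $v_{k+2}=j$ arises in a unique way as a walk $v_{1},\ldots,v_{k+1}$ of length $k$ from $i$ to the vertex $\ell:=v_{k+1}$, followed by the edge $v_{k+1}v_{k+2}$; conversely, any such concatenation is a walk of length $k+1$ from $i$ to $j$. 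Hence the number of walks of length $k+1$ from $i$ to $j$ equals $\sum_{\ell}(\text{number of length-}k\text{ walks from }i\text{ to }\ell)\cdot a_{\ell j}$, which by the inductive hypothesis is exactly the right-hand side above. This closes the induction and proves the proposition.

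There is essentially no obstacle here; the result is standard and short. The only step needing a moment of care is the bijective decomposition in the inductive step — checking that deleting the last edge of a length-$(k+1)$ walk yields a genuine walk of length $k$ together with an incident edge, and that this correspondence is one-to-one — after which the statement follows immediately from the induction hypothesis and the definition of the matrix product.
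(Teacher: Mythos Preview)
Your proof is correct; the induction on $k$ via the matrix identity $A^{k+1}=A^{k}A$ together with the bijection between length-$(k+1)$ walks and pairs (length-$k$ walk, appended edge) is the standard argument. The paper itself does not supply a proof of this proposition, as it is quoted from \cite[Proposition~1.3.4]{cvet}; your argument is exactly the classical one found there.
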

Since
$tr(A)$ is equal to the summation of eigenvalues of $A,$ Proposition \ref{s11} implies that 
the number of closed walks of length $k$ equals the summation of eigenvalues of $A^{k}.$ 
The Following theorem shows the difference of the spectrum of bipartite and the spectrum nonbipartite graphs.
\begin{theorem}  \cite [Theorem $3.2.3$] {cvet}
A graph $G$ is bipartite if and only if its spectrum is symmetric with respect to the origin.
\end{theorem}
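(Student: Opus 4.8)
The plan is to establish the two implications separately: for the ``only if'' direction I would exploit the block form that the adjacency matrix takes when the vertices are ordered by the parts of the bipartition, and for the ``if'' direction I would count closed walks of odd length via traces of powers of $A$.

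\textbf{Bipartite $\Rightarrow$ symmetric spectrum.} Suppose $V(G)=U\cup V$ is a bipartition, and order the vertices so that those of $U$ precede those of $V$. Since $G$ has no edge inside $U$ and none inside $V$, the adjacency matrix takes the form
\[
A=\begin{pmatrix} 0 & B \\ B^{T} & 0 \end{pmatrix}.
\]
If $\lambda$ is an eigenvalue with eigenvector written in block form as $\binom{x}{y}$, then $By=\lambda x$ and $B^{T}x=\lambda y$, whence
\[
A\begin{pmatrix} x \\ -y \end{pmatrix}=\begin{pmatrix} -By \\ B^{T}x \end{pmatrix}=\begin{pmatrix} -\lambda x \\ \lambda y \end{pmatrix}=-\lambda\begin{pmatrix} x \\ -y \end{pmatrix}.
\]
So $-\lambda$ is an eigenvalue, and the involution $\binom{x}{y}\mapsto\binom{x}{-y}$ carries the $\lambda$-eigenspace bijectively onto the $(-\lambda)$-eigenspace, so $\lambda$ and $-\lambda$ occur with the same multiplicity; hence the spectrum is symmetric about the origin.

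\textbf{Symmetric spectrum $\Rightarrow$ bipartite.} Assume the multiset $\{\lambda_i\}$ coincides with $\{-\lambda_i\}$. Then for every odd $k\geq 1$,
\[
tr(A^{k})=\sum_i \lambda_i^{k}=\sum_i(-\lambda_i)^{k}=-\sum_i\lambda_i^{k},
\]
so $tr(A^{k})=0$. By Proposition \ref{s11} (as already noted in the text), $tr(A^{k})$ equals the number of closed walks of length $k$ in $G$, so $G$ has no closed walk of odd length; in particular $G$ has no cycle of odd length, since a cycle is a closed walk. It remains to deduce that $G$ is bipartite: on each connected component, fix a root and $2$-color the vertices by the parity of their distance from the root — if two adjacent vertices got the same color, concatenating their shortest paths to the root with the joining edge would yield a closed walk of odd length, a contradiction. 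Coloring the components independently exhibits $G$ as bipartite.

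I expect the one genuinely nonroutine point to be this last step — that the absence of odd closed walks forces bipartiteness — which rests on the classical distance-parity $2$-coloring argument, together with the need to treat disconnected graphs component by component (the spectrum of $G$ being the disjoint union of the component spectra); the rest is elementary linear algebra plus the walk--power correspondence of Proposition \ref{s11}.
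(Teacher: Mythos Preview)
Your proof is correct and follows the standard argument. Note, however, that the paper does not actually supply its own proof of this theorem: it is quoted verbatim as Theorem~3.2.3 from the reference \cite{cvet} and used as a background fact, so there is no in-paper proof to compare against. That said, the approach you give --- the block form $A=\begin{psmallmatrix}0 & B\\ B^{T} & 0\end{psmallmatrix}$ together with the sign-flip involution for the forward direction, and the vanishing of $\operatorname{tr}(A^{k})$ for odd $k$ combined with the distance-parity $2$-coloring for the converse --- is precisely the classical proof one finds in the cited text, so your argument matches the source the paper relies on.
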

 Now, we express the following theorem 
from \cite{ dehghan}, which shows the number of cycles of length $i,$ where $i<2g.$ 
\begin{theorem} \cite [Theorem $1$] {dehghan} \label{s16}
For a $(c,d)$-regular bipartite graph $\B_{c,d},$ the number of cycles of length $i$ is equal to: 
$$N_i= [\sum\limits_{j=1}^{|V(G)|} {\lambda^i_j}- \Omega_i(c,d,\B_{c,d})- \Psi_i(c,d,\B_{c,d})]/ 2i ,$$ 
where $\lbrace \lambda ^i_j\rbrace$ is the spectrum of $\B_{c,d},$ and $\Omega_i(c,d,\B_{c,d}) $ and $ \Psi_i(c,d,\B_{c,d})$ are the 
number of closed cycle-free walks of length $i$ and closed walks with cycle of length $i$ in $G,$ respectively. 
\end{theorem}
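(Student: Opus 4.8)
The plan is to count, for each fixed even length $i$, all closed walks of length $i$ based at a fixed vertex and then pass to a global count by summing over vertices and dividing out the symmetries of a cycle. By Proposition~\ref{s11}, the total number of closed walks of length $i$ in $\B_{c,d}$ equals $\sum_{j=1}^{|V(G)|}\lambda_j^i$, so the identity amounts to showing that this quantity splits as
\[
\sum_{j=1}^{|V(G)|}\lambda_j^i = 2i\,N_i + \Omega_i(c,d,\B_{c,d}) + \Psi_i(c,d,\B_{c,d}).
\]
First I would classify every closed walk $\W$ of length $i$ by the induced subgraph $H$ on its edge set: either $H$ is a tree (equivalently $\W$ is cycle-free, contributing to $\Omega_i$), or $H$ contains a cycle but $\W$ is not itself a cycle (a CWWC, contributing to $\Psi_i$), or $\W$ is a cycle of length exactly $i$. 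These three cases are exhaustive and mutually exclusive by the definitions given in Section~2, so the total closed-walk count is $\Omega_i + \Psi_i$ plus the number of closed walks that trace out a cycle of length $i$.

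The one remaining point is that the number of closed walks of length $i$ that are themselves cycles of length $i$ equals $2i\,N_i$: a cycle on $i$ vertices, regarded as a closed walk, can be started at any of its $i$ vertices and traversed in either of two directions, giving exactly $2i$ distinct closed walks per geometric cycle, and conversely every length-$i$ closed walk whose induced subgraph is an $i$-cycle arises this way. Here it is essential that $i < 2g$: since the girth is $g$, a closed walk of length $i<2g$ cannot pass twice around a short cycle nor ``pinch'' two disjoint short cycles together in a way that would make the orbit-counting fail, so the naive factor $2i$ is exact. Combining the three cases and dividing by $2i$ yields the stated formula for $N_i$.

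The main obstacle is making the case analysis airtight, in particular verifying that when $i<2g$ a closed walk whose induced edge-subgraph contains a cycle but which is not a cycle genuinely falls under the CWWC count and is correctly excluded from the ``$2i\,N_i$'' term — i.e.\ that the only closed walks of length $i$ with \emph{no} repeated edge or vertex pattern beyond a single cycle are the $2i$ traversals of an honest $i$-cycle. I expect the bound $i<2g$ to be invoked precisely to rule out degenerate configurations (a walk around a $g$-cycle plus a backtracking tail of total length $<2g$ that returns to the same cycle, two short cycles sharing a path, etc.), so the bulk of the argument is bookkeeping with the girth constraint rather than any spectral computation; the spectral input is only Proposition~\ref{s11} together with $\operatorname{tr}(A^i)=\sum_j\lambda_j^i$.
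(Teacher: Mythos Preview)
The paper does not prove this theorem; it is quoted verbatim as Theorem~1 from \cite{dehghan} and stated without proof, so there is no argument in the paper to compare yours against. Your proposal is the standard (and correct) justification: by Proposition~\ref{s11} and $\operatorname{tr}(A^i)=\sum_j\lambda_j^i$ the left side counts all closed walks of length $i$, and the definitions in Section~2 partition these into cycles, closed cycle-free walks, and CWWCs, giving $2iN_i+\Omega_i+\Psi_i$.

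One correction to your write-up: the hypothesis $i<2g$ is \emph{not} needed for the factor $2iN_i$ to be exact. A closed walk of length $i$ that is a cycle has, by definition, all interior vertices distinct, so it is an $i$-cycle, and each geometric $i$-cycle yields exactly $2i$ such closed walks regardless of how $i$ compares to $g$; there is no ``pinching'' or double-traversal to worry about because those walks are not cycles and hence already fall under $\Psi_i$. The identity in the theorem is therefore a tautology valid for every $i$. The restriction $i<2g$ matters only for the \emph{computation} of $\Psi_i$ carried out in Sections~3 and~4, where one needs each CWWC to contain a unique cycle so that the decomposition into $\Phi(i,2k)$ and $\Lambda(i,2k)$ works; this is exactly what Dehghan and Banihashemi's counterexamples (mentioned in the introduction) show fails for $i\geq 2g$.
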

In \cite{blake}, Blake and Lin have already found the value $\Omega_i(c,d,\B_{c,d}) $ and showed that 
$$ \Omega_i(c,d,\B_{c,d}) = n b_{c,i}+m b_{d,i}.$$

In this work, we compute the number of closed walks of length 
$i+2k,$ $1 \leq k < g-\frac{i}{2},$ which contain a cycle of length $i, i <2g.$ For each walk of $\B_{c,d}$ 
we can consider a direction. 
Let $\W$ be an arbitrary closed walk with cycle of 
length $i.$
By passing the walk sequence of $\W,$ if we traverse clockwise in the cycle, 
then we define the direction of $\mathscr{W}$ is clockwise. 
Otherwise, we define the direction of $\W$ is counterclockwise. A CWWC walk with direction counterclockwise is denoted by 
$CWDCC.$
 Now, suppose that $\C$ is 
a cycle of length $i$ with vertices 
$v_j, 0 \leq j \leq i-1($see Fig. $1).$ Throughout this paper, indices of vertices of $\C$ are taken modulo $i$ and $d(v_0)=c.$
\begin{figure}[H]
\centering
\begin{tikzpicture}[node distance=\GraphNodeDistance, >=stealth',
 minimum size=\GraphNodeSize, inner sep=\GraphInnerSep, line width=\GraphLineWidth]
\tikzstyle{init} = [pin edge={to-, thin, white}]
\tikzstyle{place}=[circle, draw ,thick,fill=black]
\tikzstyle{label}=[circle , minimum size=1 pt,thick]
\node [place] (v0) at (-3,-5) {};
\node [place] (v1) at (-4,-5) {};
\node [place] (v2) at (-4.5,-4) {};
\node [place] (v3) at (-2.5,-4) {};
\node [place] (v4) at (-4,-3) {};
\node [place] (v5) at (-3,-3) {};

\node [label] at (-2.8,-5.8) [left] {Fig 1. };

\draw[-] (v0) -- (v1);
\draw[-] (v0) -- (v3);
\draw[-] (v1) -- (v2);
\draw[-] (v4) -- (v5);
\draw[-] (v2) -- (v4);

\node [label] at (-4,-3) [left] { $v_0$};
\node [label] at (-4,-5) [left] { $v_2$};
\node [label] at (-4.5,-4) [left] { $v_1$};
\node [label] at (-3,-3) [right] { $v_{i-1}$};
\node [label] (hdot1) at (-2.75,-3.5) {$.$};
\node [label] (hdot1) at (-2.85,-3.3) {$.$};
\node [label] (hdot1) at (-2.65,-3.7) {$.$};
\end{tikzpicture}
\end{figure}
\section{Closed walks in $(c,d)$-regular graphs with initial vertex in cycle}
\begin{definition}
For integer $k,$ where $1 \leq k < g-\frac{i}{2},$ the set of closed walks of length $i+2k$ with cycle of length 
$i$ such that the initial vertex is in cycle or out of cycle is denoted by 
$\Phi(i,2k)$ or $\Lambda(i,2k),$ respectively. Therefore, 
$$\Psi_{i}(c,d,\B_{c,d})= 
\sum\limits_{k_0+k_1= g-\frac{i}{2}-1}^{} {\Big{(}|\Phi(g+2k_0,2k_1)|+|\Lambda(g+2k_0,2k_1)|\Big{)}}. 
$$
\begin{definition}
Suppose that $\W$ is a CWDCC 
with walk sequence $u_1, \ldots, u_{k+1}.$ 
For a positive integer $j$ less than $k+1,$ a closed cycle-free walk $\W'$ with initial vertex $u_j$ is called \textit{backward} 
respect to 
the $\W,$ if $u_j u_{j+1} \notin E(\W').$ The walks $\W^{\ast}$ and $\W^{\ast \ast}$ are defined as the CWDCC of Fig. $1$ and 
Fig. $2$ with starting vertex $v_0$ and $z_0,$ respectively.
\end{definition}
\end{definition}
In this section, we investigate the number of walks of $\Phi(i,2k)$ that consists cycle $\C.$ We 
first determine the number of CWDCCs of $\Phi(i,2k)$ with cycle $\C$ and initial vertex $v_0,$ denoted
$\Phi_{v_0}(i,2k).$ 
\begin{remark} \label{s14}
Suppose that $\W \in \Phi_{v_0}(i,2k).$ Then represent $\W$ as a sequence 
of vertices. Denote the first $v_0$ which appears in the sequence
by $v^{(1)}_0$, and the first $v_j$ that appears after 
$v^{(1)}_{j-1}$ denote by $v^{(1)}_{j},$
 for every $j,$ $0< j \leq i.$ Since $\W$ is a CWDCC, we deduce that the vertex before $v^{(1)}_{j}$ in 
the sequence is $v_{j-1},$ where $0< j \leq i.$ Hence, we can represent  $\W$ as follow:
$$v^{(1)}_0, \ldots ,v_0, v^{(1)}_{1}, \ldots ,v_{1}, v^{(1)}_{2}, \ldots , v_{i-1}, v^{(1)}_{i}, \ldots ,v_0.$$
Now, define closed subwalk $\W_j$ of $\W$ with initial vertex $v^{(1)}_{j}$ and 
terminal vertex $v^{(1)}_{j+1},$ for each $j$ with $ 0\leq j \leq i-1.$ 
Moreover, the walk 
with initial vertex $v^{(1)}_{i}$ and 
terminal vertex $v_0($The last vertex of $\W )$ is denoted by $\W_{i}.$ 
Thus, we can represent $\W$ uniquely as follow:
$$\W= \W_0 \W_1 \cdots \W_{i}.$$
Now, put $2s'_i=\ell(\W_{i}),$ and $2s'_j=\ell(\W_{j})-1,$ for each $j$ with 
$ 0 \leq j \leq i.$ By simple computing, we have 

$$s'_0+ \ldots + s'_{i}=\frac{1}{2}  \Big{(} \sum\limits_{j = 0}^{i} {\ell(\W_{j})}- \sum\limits_{j = 0}^{i-1} {1} \Big{)} $$

$$= 
\frac{1}{2}( \ell(\W)- i)$$ 
$$=
\frac{1}{2}( i+2k- i )=k.
$$
\end{remark}

\begin{definition}
We denote $\Phi_{v_0}(2s_0, \ldots,2s_{i})$ for 
the walks $ \W \in \Phi_{v_0}(i,2k)$ such that if $\W$ represent uniquely as
$\W= \W_0 \W_1 \cdots \W_{i},$ then $2s_j=\ell(\W_j)-1$ for $0 \leq j \leq i-1,$ and $2s_{i}= \ell(\W_i).$
\end{definition}
By the Remark \ref{s11}, we have following result. 
\begin{corollary} \label{s12}
For a positive integer $k$ with $1 \leq k < g-\frac{i}{2},$ we have:
$$\Phi_{v_0}(i,2k)=\bigcup\limits_{{s_0} +  \ldots  + {s_{i }} = k}^{} {{\Phi _{{v_0}}}({2s_0}, \ldots ,{2s_{i }})}. $$
\end{corollary}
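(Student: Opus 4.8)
The plan is to read the identity off directly from the structural analysis of Remark~\ref{s14}; no new idea is needed, and the work reduces to checking that the bookkeeping is consistent. I would prove the two inclusions separately.

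For $\bigcup_{s_0+\cdots+s_i=k}\Phi_{v_0}(2s_0,\ldots,2s_i)\subseteq\Phi_{v_0}(i,2k)$ there is nothing to do: by definition every set $\Phi_{v_0}(2s_0,\ldots,2s_i)$ is already a subset of $\Phi_{v_0}(i,2k)$, so the same holds for their union. Restricting the index to tuples with $s_0+\cdots+s_i=k$ loses nothing, since by Remark~\ref{s14} the exponents attached to any walk of $\Phi_{v_0}(i,2k)$ satisfy that equation, while tuples violating it give the empty set.

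For the reverse inclusion I would take an arbitrary $\W\in\Phi_{v_0}(i,2k)$ and apply the unique factorization $\W=\W_0\W_1\cdots\W_i$ from Remark~\ref{s14}. The point to check is that the numbers it produces are legitimate indices. For $0\le j\le i-1$, the block $\W_j$ is a closed walk based at $v_j$ followed by the single cycle-edge $v_jv_{j+1}$, so $\ell(\W_j)\ge 1$ and $\ell(\W_j)$ is odd, because a closed walk in a bipartite graph has even length; hence $s_j:=(\ell(\W_j)-1)/2$ is a non-negative integer. Similarly $\W_i$ is a closed walk based at $v_0$, so $\ell(\W_i)$ is even and $s_i:=\ell(\W_i)/2$ is a non-negative integer. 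The length count already done in Remark~\ref{s14} gives $s_0+\cdots+s_i=k$, and by the definition of $\Phi_{v_0}(2s_0,\ldots,2s_i)$ the walk $\W$ lies in the member of the union indexed by this tuple. This gives $\Phi_{v_0}(i,2k)\subseteq\bigcup_{s_0+\cdots+s_i=k}\Phi_{v_0}(2s_0,\ldots,2s_i)$, hence equality.

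I do not expect any genuine obstacle; the only delicate point is the parity and non-negativity check above, which is exactly what makes the exponents $2s_j$ well defined. I would also note, since this is what is used when the corollary is later converted into a count, that the union is in fact \emph{disjoint}: the factorization in Remark~\ref{s14} is unique, so the tuple $(s_0,\ldots,s_i)$ is determined by $\W$, and no walk can belong to two different blocks $\Phi_{v_0}(2s_0,\ldots,2s_i)$.
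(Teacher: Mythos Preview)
Your proposal is correct and follows the same approach as the paper, which simply states the corollary as an immediate consequence of the decomposition in Remark~\ref{s14} without further argument. You have filled in the bookkeeping (parity, non-negativity, and disjointness) that the paper leaves implicit; nothing more is required.
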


It is easy to see that the number of CWDCCs with initial vertex $v_{2j},$ $0 \leq j \leq \frac{i}{2}-1,$ is
equal. Similar result satisfies for $v_{2j+1},$ where $0 \leq j \leq \frac{i}{2}-1.$ 
 Since the number of vertices of degree $c$ and $d$ are $\frac{i}{2}$ and each CWWC has two directions, we
have following result. 
\begin{corollary}
The 
number of walks of $\Phi(i,2k)$ with cycle $\C$ is equal to:
$$i \Big{(}\sum\limits_{{s_0} +  \ldots  + {\rm{ }}{s_{i }} = k} {|{\Phi _{{v_0}}}(2{s_0}, \ldots ,2{s_{i}})| +  
|{\Phi _{{v_1}}}(2{s_0}, \ldots ,2{s_{i}})|} \Big{)}.
$$
\end{corollary}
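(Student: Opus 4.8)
The plan is to enumerate the walks of $\Phi(i,2k)$ containing $\C$ by sorting them first by the direction in which they wind around $\C$ and then by their initial vertex, thereby reducing the count to the quantities $|\Phi_{v_0}(2s_0,\ldots,2s_i)|$ and $|\Phi_{v_1}(2s_0,\ldots,2s_i)|$. \emph{Step 1 (direction).} Reversing a closed walk $u_1,u_2,\ldots,u_{\ell},u_1$ to $u_1,u_{\ell},\ldots,u_2,u_1$ is a length-preserving involution that fixes the initial vertex and the edge-induced subgraph, hence it carries CWWCs containing $\C$ to CWWCs containing $\C$ while reversing the sense of travel around $\C$. Thus it interchanges the clockwise and the counterclockwise walks of this set, so that set has exactly twice as many members as the set of $CWDCC$s containing $\C$, and it suffices to count the latter and double.

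\emph{Step 2 (initial vertex).} A $CWDCC$ containing $\C$ has its initial vertex on $\C$, so the number of such walks is $\sum_{j=0}^{i-1}|\Phi_{v_j}(i,2k)|$. I would then prove
$$|\Phi_{v_{2j}}(i,2k)|=|\Phi_{v_0}(i,2k)|\quad\text{and}\quad|\Phi_{v_{2j+1}}(i,2k)|=|\Phi_{v_1}(i,2k)|\qquad(0\le j\le\tfrac i2-1).$$
Consider the cyclic relabelling $v_{\ell}\mapsto v_{\ell+2j}$ of $V(\C)$ (indices mod $i$); since $d(v_{2j})=c=d(v_0)$ it preserves the degree sequence along $\C$. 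By Remark \ref{s14} applied with $v_{2j}$ in the role of the root, every $\W\in\Phi_{v_{2j}}(2s_0,\ldots,2s_i)$ factors uniquely as $\W=\W_0\cdots\W_i$, where for $\ell<i$ the block $\W_{\ell}$ is the cycle edge $v_{2j+\ell}v_{2j+\ell+1}$ preceded by a backward closed cycle-free excursion of length $2s_{\ell}$ anchored at $v_{2j+\ell}$, and $\W_i$ is a (suitably constrained) closed cycle-free excursion of length $2s_i$ at $v_{2j}$. The decisive point is that $k<g-\frac i2$ together with $i\ge g$ forces $2s_{\ell}\le 2k<g$; a closed walk of length $<g$ stays within distance $<\frac g2$ of its start, a region containing no cycle, so each excursion lies inside a tree whose isomorphism type is determined solely by the degree of its anchor, and distinct excursions are edge-disjoint off $\C$. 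Hence the number of choices for each block depends only on the degree of its anchor vertex, and these degrees agree before and after the relabelling; matching blocks termwise yields a bijection $\Phi_{v_{2j}}(2s_0,\ldots,2s_i)\leftrightarrow\Phi_{v_0}(2s_0,\ldots,2s_i)$. Summing over all $(s_0,\ldots,s_i)$ with $s_0+\cdots+s_i=k$ through Corollary \ref{s12} gives the first equality, and the second is the same argument rooted at $v_1$.

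\emph{Step 3 (assembly).} Exactly $\frac i2$ of the vertices $v_0,\ldots,v_{i-1}$ have degree $c$ and $\frac i2$ have degree $d$, so Step 2 turns the sum $\sum_{j=0}^{i-1}|\Phi_{v_j}(i,2k)|$ into $\frac i2\big(|\Phi_{v_0}(i,2k)|+|\Phi_{v_1}(i,2k)|\big)$, and by Step 1 the number of walks of $\Phi(i,2k)$ containing $\C$ equals $i\big(|\Phi_{v_0}(i,2k)|+|\Phi_{v_1}(i,2k)|\big)$. Since Corollary \ref{s12} exhibits $\Phi_{v_0}(i,2k)$ as the \emph{disjoint} union $\bigcup_{s_0+\cdots+s_i=k}\Phi_{v_0}(2s_0,\ldots,2s_i)$ (the block lengths recover the $s_{\ell}$), we have $|\Phi_{v_0}(i,2k)|=\sum_{s_0+\cdots+s_i=k}|\Phi_{v_0}(2s_0,\ldots,2s_i)|$, and likewise for $v_1$; substituting gives the asserted formula.

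The main obstacle is the symmetry assertion of Step 2; everything else — the reversal involution, the split into $\frac i2$ degree-$c$ and $\frac i2$ degree-$d$ starting vertices, and the passage through the disjoint decomposition of Corollary \ref{s12} — is routine bookkeeping. To make the symmetry rigorous one must pin down precisely that the bound $k<g-\frac i2$ confines every off-cycle excursion to a cycle-free neighbourhood, so that (as is already implicit in Remark \ref{s14}) the combinatorics of each block $\W_{\ell}$ is governed only by the degree of its anchor, and then verify that reassembling the transported blocks really produces a bona fide element of $\Phi_{v_0}(2s_0,\ldots,2s_i)$, that is, a closed walk that still has $\C$ as its relevant cycle of length $i$.
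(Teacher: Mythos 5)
Your proposal is correct and follows essentially the same route as the paper, which disposes of the corollary in one sentence by noting that all degree-$c$ (resp.\ degree-$d$) starting vertices on $\C$ contribute equally, that there are $\frac{i}{2}$ of each, that every CWWC comes in two directions, and then invoking the decomposition of Corollary \ref{s12}; your Steps 1--3 simply make the reversal involution, the same-degree symmetry, and the disjointness of that decomposition explicit. The only caveat is your phrasing that a closed walk of length less than $g$ stays in a cycle-free ball of radius less than $\frac{g}{2}$: the cleaner justification (and the one implicit in the paper's definitions of $t_{c,2s}$ and $b_{c,2s}$, which are already taken to depend only on the anchor's degree) is that such an excursion uses fewer than $g$ edges and hence is automatically cycle-free, so its count is the corresponding tree count.
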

\begin{lemma} \label{s6}
If $s_0+\ldots+s_{i}=k,$ then 
$$|{\Phi _{{v_0}}}({2s_0}, \ldots ,{2s_{i }})|= 
b_{ c,2s_{ i } } \prod\limits_{j =0}^{\frac{i}{2}-1} {t_{ c,2s_{2 j} } t_{ d,2s_{2 j+1} },} $$
and 
$$|{\Phi _{{v_1}}}({2s_0}, \ldots ,{2s_{i }})|= 
b_{ d,2s_{ i } } \prod\limits_{j =0}^{\frac{i}{2}-1} {t_{ d,2s_{2 j} } t_{ c,2s_{2 j+1} }}. $$

\end{lemma}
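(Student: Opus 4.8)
The plan is to decompose a walk $\W\in\Phi_{v_0}(2s_0,\ldots,2s_i)$ along the cycle $\C$ using the canonical factorisation $\W=\W_0\W_1\cdots\W_i$ from Remark \ref{s14}, and to identify each factor $\W_j$ with an independent closed cycle-free walk rooted at $v_j$ (or at $v_0$ for $j=i$), subject only to a ``backward'' constraint that forbids the first step of $\W_j$ from re-using the cycle edge $v_jv_{j+1}$. Because $\W$ is a CWDCC of length $i+2k<2g$, every proper closed subwalk is too short to contain a cycle, so each $\W_j$ is genuinely cycle-free; moreover the whole structure outside $\C$ forced by the $\W_j$'s consists of trees hanging off $\C$, with no interaction between the trees at distinct cycle vertices (any such interaction would create a second cycle, again impossible for length reasons). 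This independence is what turns $|\Phi_{v_0}(2s_0,\ldots,2s_i)|$ into a product over $j$.

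First I would fix $j$ with $0\le j\le i-1$ and count the admissible middle factors $\W_j$: these are the closed walks of length $\ell(\W_j)=2s_j+1$ starting at $v_j$, ending at $v_{j+1}$ — wait, no: by the setup $\W_j$ starts at $v_j^{(1)}$ and ends at $v_{j+1}^{(1)}$, and by Remark \ref{s14} the vertex just before $v_{j+1}^{(1)}$ is $v_j$, so $\W_j$ has odd length $2s_j+1$, begins and ends by leaving and returning near $v_j$, and its first edge is not $v_jv_{j+1}$. Splitting off that final step $v_j\to v_{j+1}$, what remains is a closed cycle-free walk of length $2s_j$ at $v_j$ whose first edge avoids $v_jv_{j+1}$; since $v_jv_{j+1}$ is one of the $d(v_j)$ edges at $v_j$ and the rooted-tree model $\T_{c,d}$ (resp. $\T_{d,c}$) is precisely the local universal cover with the root having degree $d(v_j)-1$, the number of such walks is $t_{c,2s_j}$ when $d(v_j)=c$ and $t_{d,2s_j}$ when $d(v_j)=d$. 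Using the convention $d(v_0)=c$ and the alternation along $\C$, the even-indexed vertices $v_{2\ell}$ have degree $c$ and the odd-indexed $v_{2\ell+1}$ have degree $d$, giving the factors $t_{c,2s_{2\ell}}$ and $t_{d,2s_{2\ell+1}}$ in the stated product.

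Next I would treat the last factor $\W_i$ separately: it is the closed subwalk from $v_i^{(1)}=v_0$ (the final return to $v_0$ along the cycle brings us to $v_0$, after which no more cycle edge is traversed) back to the terminal $v_0$, of even length $2s_i$, and it is cycle-free and unconstrained — there is no ``backward'' restriction here because after completing the loop around $\C$ the walk may leave $v_0$ along any incident edge, including the cycle edges. Hence $\W_i$ ranges over all closed cycle-free walks of length $2s_i$ at a vertex of degree $c$, of which there are $b_{c,2s_i}$ by definition. Multiplying the independent choices over $j=0,\ldots,i$ yields
$$|{\Phi_{v_0}}(2s_0,\ldots,2s_i)|=b_{c,2s_i}\prod_{j=0}^{\frac{i}{2}-1}t_{c,2s_{2j}}\,t_{d,2s_{2j+1}},$$
and the formula for $|{\Phi_{v_1}}(2s_0,\ldots,2s_i)|$ follows verbatim after swapping the roles of $c$ and $d$, since then $d(v_0)=d$ and the parities of the degrees along $\C$ are interchanged.

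The main obstacle I expect is justifying the independence rigorously: one must argue that an arbitrary assignment of admissible factors $\W_0,\ldots,\W_i$ really does splice together into a single walk lying in $\Phi_{v_0}(2s_0,\ldots,2s_i)$ — i.e. that no coincidences among the trees hung at different cycle vertices can occur and that the concatenation never accidentally produces a cycle of length $<g$ or merges the trees — and conversely that the factorisation of Remark \ref{s14} is a genuine bijection onto these tuples. Both directions hinge on the hypothesis $i+2k<2g$: any extra cycle in the edge set would have length at most $i+2k-\text{(something)}<2g$, and a short argument on where its vertices sit (it cannot be $\C$ itself, cannot lie inside one tree, and cannot span two trees without a path through $\C$) forces its length below $g$, contradicting the girth. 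Once this ``no second cycle, trees are disjoint and attach only at their roots'' lemma is in place, the counting is the routine product above.
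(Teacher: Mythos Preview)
Your proposal is correct and follows the same approach as the paper: factor $\W$ via Remark~\ref{s14}, identify each $\W_j$ for $j<i$ with a backward closed cycle-free walk at $v_j$ (counted by $t_{c,2s_j}$ or $t_{d,2s_j}$ according to $d(v_j)$) and $\W_i$ with an unrestricted closed cycle-free walk at $v_0$ (counted by $b_{c,2s_i}$), then multiply. One small sharpening: the constraint on the closed subwalk of length $2s_j$ at $v_j$ is that it never visits $v_{j+1}$ (equivalently, never uses the edge $v_jv_{j+1}$), not merely that its \emph{first} step avoids that edge---but since $2s_j\le 2k<g$ the local neighbourhood is a tree and the two conditions coincide, so your identification with walks in $\T_{c,d}$ (root of degree $c-1$) already encodes the correct constraint; your explicit discussion of independence and the girth obstruction is in fact more careful than the paper's own argument, which simply asserts the factor counts.
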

\begin{proof}
By the definition of ${\Phi _{{v_0}}}({2s_0}, \ldots ,{2s_{i }}),$ we know that 
$2s_j$ is the length of the closed cycle-free walk with 
initial vertex $v_j$ 
in which it is backward respect to $\W^{\ast},$ for each $j,$ $0 \leq j \leq i-1.$ 
Hence, Dependes on whether the degree of $v_j$ is $c$ or $d,$ the number of walks of length $2s_j$ 
with this 
condition is equal to $t_{c, 2s_j}$ or $t_{d, 2s_j},$ respectively.
On the other hand, $2s_i$ is the length of closed cycle-free walks with 
cycle and initial vertex $v_0.$ Hence, the number of these 
walks equals to $b_{c,2s_{i}}$ or $b_{d,2s_{i}}.$ Similar result is satisfied for 
$|{\Phi _{{v_1}}}({2s_0}, \ldots ,{2s_{i }})|.$ 
\end{proof}
Since we computed the value $|\Phi(i,2k)|$ for specific cycle $\C$ and we have $N_i$ cycles of length $i,$ we have following result. 
\begin{corollary}  \label{s4}
For integer $k,$ $1 \leq k < g-\frac{i}{2} ,$ we have 
$$|\Phi(i,2k)|= i N_{i} \sum\limits_{{s_0} +  \ldots  + {\rm{ }}{s_{i }} = k} {(b_{ c,2s_{ i } }+ b_{ d,2s_{ i } })}  
\prod\limits_{j =0}^{\frac{i}{2}-1} {t_{ d,2s_{2 j} } t_{ c,2s_{2 j+1} }} .
$$
\end{corollary}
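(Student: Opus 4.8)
The plan is to compute $|\Phi(i,2k)|$ by first counting the walks of $\Phi(i,2k)$ whose cycle is one fixed cycle $\C$ of length $i$, and then summing over all $N_i$ such cycles. The per-cycle count is already available: the corollary immediately preceding this one expresses the number of walks of $\Phi(i,2k)$ with cycle $\C$ as $i\sum_{s_0+\cdots+s_i=k}\big(|\Phi_{v_0}(2s_0,\ldots,2s_i)|+|\Phi_{v_1}(2s_0,\ldots,2s_i)|\big)$, and Lemma~\ref{s6} evaluates the two summands. Substituting, the number of walks of $\Phi(i,2k)$ with cycle $\C$ equals
$$i\sum_{s_0+\cdots+s_i=k}\Big(b_{c,2s_i}\prod_{j=0}^{\frac{i}{2}-1}t_{c,2s_{2j}}t_{d,2s_{2j+1}}+b_{d,2s_i}\prod_{j=0}^{\frac{i}{2}-1}t_{d,2s_{2j}}t_{c,2s_{2j+1}}\Big).$$
It then remains to fold the two products into the single product appearing in the statement, and to pass from one cycle to all of them.

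For the first task I would use the reindexing map on the index set $\{s_0+\cdots+s_i=k\}$ that interchanges consecutive pairs, $(s_0,s_1,s_2,s_3,\ldots,s_{i-2},s_{i-1},s_i)\mapsto(s_1,s_0,s_3,s_2,\ldots,s_{i-1},s_{i-2},s_i)$. Because $i$ is even, $\{0,\ldots,i-1\}$ has exactly $\tfrac{i}{2}$ even and $\tfrac{i}{2}$ odd indices, so this is a bijection of the index set onto itself; it preserves the constraint $\sum s_j=k$ and the factor $b_{c,2s_i}$, and it interchanges $\prod_j t_{c,2s_{2j}}t_{d,2s_{2j+1}}$ with $\prod_j t_{d,2s_{2j}}t_{c,2s_{2j+1}}$. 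Hence
$$\sum_{s_0+\cdots+s_i=k}b_{c,2s_i}\prod_{j=0}^{\frac{i}{2}-1}t_{c,2s_{2j}}t_{d,2s_{2j+1}}=\sum_{s_0+\cdots+s_i=k}b_{c,2s_i}\prod_{j=0}^{\frac{i}{2}-1}t_{d,2s_{2j}}t_{c,2s_{2j+1}},$$
and combining this with the (already matching) $b_d$-term shows that the number of walks of $\Phi(i,2k)$ with cycle $\C$ equals $i\sum_{s_0+\cdots+s_i=k}(b_{c,2s_i}+b_{d,2s_i})\prod_{j=0}^{\frac{i}{2}-1}t_{d,2s_{2j}}t_{c,2s_{2j+1}}$. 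This value is independent of the chosen $\C$, which is consistent with the fact that along any length-$i$ cycle of $\B_{c,d}$ the degrees alternate $c,d,c,d,\ldots$ with $\tfrac{i}{2}$ of each.

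Finally I would sum over all $N_i$ cycles of length $i$. The step I expect to be the real obstacle is confirming that this summation does not overcount — that is, that each walk in $\Phi(i,2k)$ has a uniquely determined cycle of length $i$, so that it is recorded against exactly one $\C$. This is precisely where the hypothesis $1\le k<g-\tfrac{i}{2}$, equivalently $i+2k<2g$, enters: a closed walk of length less than $2g$ carries only one cycle (this is already implicit in the decomposition $\W=\W_0\W_1\cdots\W_i$ of Remark~\ref{s14}, where the ``cycle part'' $\C$ is well defined and traversed once), so the fibres of $\W\mapsto(\text{its cycle})$ over the length-$i$ cycles partition $\Phi(i,2k)$. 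Granting this, $|\Phi(i,2k)|=iN_i\sum_{s_0+\cdots+s_i=k}(b_{c,2s_i}+b_{d,2s_i})\prod_{j=0}^{\frac{i}{2}-1}t_{d,2s_{2j}}t_{c,2s_{2j+1}}$, which is the assertion.
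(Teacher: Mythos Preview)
Your proposal is correct and follows essentially the same route as the paper: the paper's entire justification of Corollary~\ref{s4} is the sentence ``Since we computed the value $|\Phi(i,2k)|$ for specific cycle $\C$ and we have $N_i$ cycles of length $i$, we have following result,'' i.e., plug Lemma~\ref{s6} into the per-cycle corollary and multiply by $N_i$. You supply two pieces of reasoning the paper leaves implicit --- the reindexing bijection that collapses the $v_0$- and $v_1$-products into the single product of the statement, and the use of $i+2k<2g$ to ensure each walk in $\Phi(i,2k)$ determines a unique cycle so that summing over the $N_i$ cycles does not overcount --- but these are refinements of the same argument rather than a different approach.
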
 
From \cite{blake}, we observe that $a_{c,2l} = c$ $ t_{d,2l-2}, $ and $a_{d,2l} = d $ $t_{c,2l-2}$ 
for $l<g.$ Thus, we can rewrite the equation of Corollary
\ref{s4} by using just $a_{c,2j},b_{c,2j}, a_{d,2j}$ and $b_{d,2j},$ for $1 \leq j \leq k.$ 
\section{Closed walks in $(c,d)$-regular graphs with initial vertex out of cycle}
In this section, we investigate on the size of
$\Lambda(i,2k).$ In the following definition, we classify the CWDCC's of $\C$ with initial vertex out of cycle.

\begin{definition}
Let $l$ and $j$ be integers with $1 \leq l \leq k$ and $0 \leq j < i.$ Then $\w
({v_j},l)$ is defined as the set of  
the CWDCCs with cycle $\C$ and initial vertex $z$ such that satisfy in the following conditions:
\begin{itemize}
\item[(i)]
$z \in V(G-\C)$. 
\item[(ii)]
$d(z,v_j)=l.$

\item[(iii)] 
If $\W \in \w({v_j},l),$ then the first vertex of $\C$ that appears in the walk sequence of $\W$ is $v_j.$ 
\end{itemize} 
\end{definition}
The set of initial vertices of $\w({v_j},l)$ is denoted by $N_{l}(v_j),$ where $1 \leq l \leq k.$ 
It is not difficult to see that if $d(v_j)=c,$ then $|N_{l}(v_j)|$ is equal to: 
$$(c-2)(d-1)^{\lceil\frac{l-1}{2}\rceil}(c-1)^{\lfloor \frac{l-1}{2}\rfloor}.$$ 
Otherwise, 
$$(d-2)(c-1)^{\lceil\frac{l-1}{2}\rceil}(d-1)^{\lfloor \frac{l-1}{2}\rfloor}.$$

First, we find the number of walks $ \W \in \Lambda(i,2k)$ with cycle $\C$ such that $v_0$ is 
the first vertex of $\C$ appears in the walk sequence $\W.$
For $z_0 \in N_{l}(v_0),$ since there is not a cycle of 
length $2l,$ we have a unique path of length $l$ between $z_0$ and $v_0, $ say $P_{z_0v_0}.$ We denote this unique 
path by
$ 
z_0 \sim z_1 \sim z_2 \ldots z_{l-1} \sim v_0 ($see Fig. $2).$ The set of walks of $\w({v_0},l) $ with initial vertex 
$z_0$ is denoted by $\Lambda^{z_0}_{v_0}(l).$ 
\begin{figure}[H]
\centering
\begin{tikzpicture}[node distance=\GraphNodeDistance, >=stealth',
 minimum size=\GraphNodeSize, inner sep=\GraphInnerSep, line width=\GraphLineWidth]
\tikzstyle{init} = [pin edge={to-, thin, white}]
\tikzstyle{place}=[circle, draw ,thick,fill=black]
\tikzstyle{label}=[circle , minimum size=1 pt,thick]
\node [place] (v0) at (-3,-5) {};
\node [place] (v1) at (-4,-5) {};
\node [place] (v2) at (-4.5,-4) {};
\node [place] (v3) at (-2.5,-4) {};
\node [place] (v4) at (-4,-3) {};
\node [place] (v5) at (-3,-3) {};
\node [place] (v6) at (-7.6,-3) {};
\node [place] (v7) at (-6.7,-3) {};
\node [place] (v8) at (-5.8,-3) {};
\node [place] (v9) at (-4.9,-3) {};

\node [label] at (-4,-5.7) [left] {Fig. $2$. };

\draw[-] (v5) -- (v8);
\draw[-] (v0) -- (v1);
\draw[-] (v6) -- (v7);
\draw[-] (v8) -- (v9);
\draw[-] (v0) -- (v3);
\draw[-] (v1) -- (v2);
\draw[-] (v4) -- (v5);
\draw[-] (v2) -- (v4);

\node [label] at (-3.7,-2.7) [left] { $v_0$};
\node [label] at (-4.5,-4) [left] { $v_1$};
\node [label] at (-4.9,-2.7) [] { $z_{l-1}$};
\node [label] at (-5.8,-2.7) [] { $z_{l-2}$};
\node [label] at (-6.7,-2.7) [] { $z_{1}$};
\node [label] at (-7.6,-2.7) [] { $z_{0}$};
\node [label] at (-3,-3) [right] { $v_{i-1}$};
\node [label] (hdot1) at (-2.75,-3.5) {$.$};
\node [label] (hdot1) at (-2.85,-3.3) {$.$};
\node [label] (hdot1) at (-2.65,-3.7) {$.$};
\node [label] (hdot1) at (-6.2,-3) {$\ldots$};

\end{tikzpicture}
\end{figure}
\begin{remark} \label{s13}
Suppose that $\W \in \Lambda^{z_0}_{v_0}(l).$ 
Denote the initial and terminal vertex of $\W$ by $u_0$ and $u_{i+2l+1},$ respectively. Denote 
the second vertex of $\W^{\ast \ast}$ that 
appears in $\W$  
after the $u_0$ by $u_1,$ and the third vertex of 
$\W^{\ast \ast}$ which 
appears in $\W$  
after the $u_1$ by $u_2.$
Continuing in this way, denote $j$-th vertex of $\W^{\ast \ast}$ 
that appears after $u_{j-2},$ by $u_{j-1},$ for $2 \leq j \leq i+2l+1.$ Now, define the subwalks of $\W$ as follow. For each 
$j, 0 \leq j \leq i+2l,$ define subwalk $\W_j$ of $\W$ with initial and terminal vertex $u_j$ and $u_{j+1}.$ Hence, we can represent  
$\W$ uniquely as bellow:
$$\W= \W_0 \cdots \W_{i+2l}.$$
Similar to the Remark \ref{s11}, assume that $2s'_{i+2l}=\ell(\W_{i+2l}),$ and $2s'_j=\ell(\W_{j})-1,$ for each $j$ with 
$ 0 \leq j \leq i+2l-1.$ By simple computing, we have 

$$s'_0+ \ldots + s'_{i+2l}=\frac{1}{2}  \Big{(} \sum\limits_{j = 0}^{i+2l} {\ell(\W_{j})} - \sum\limits_{j = 0}^{i+2l-1} {1} \Big{)} $$
$$ \  \  \  \  \  \  \  \  \  \  \  \  \    \  \   \  \  \  \   \  \   \ \   \  \  \  \   = 
\frac{1}{2} \Big{(}\ell(\W)-(i+2l) \Big{)}$$ 
$$  \  \  \  \  \  \  \  \  \  \  \  \  \    \  \   \  \  \  \   \  \   \ \   \  \  \  \   \  \   =
\frac{1}{2}\Big{(} (i+2k)- (i+2l)\Big{)}=k-l.
$$
\end{remark} 
\begin{definition}
We denote $\Lambda^{{z_0}}_{v_0}({2s_0}, \ldots ,{2s_{i+2l }})$ for 
the walks of $\W \in \Lambda^{z_0}_{v_0}(l)$ such that if we represent $\W$ uniquely as
$\W= \W_0 \W_1 \cdots \W_{i+2l},$ then $2s_j=\ell(\W_j)-1,$ for $0 \leq j < i+2l,$ and $2s_{i+2l}= \ell(\W_{i+2l}).$
\end{definition}
Since for each walk of $ \Lambda^{z_0}_{v_0}(l),$ there is a unique representation, by Remark \ref{s13}, we have 
the following result.
\begin{corollary} \label{s15}
For a positive integer $l$ with $1 \leq l \leq k,$ we have
$$\Lambda^{z_0}_{v_0}(l)=\bigcup\limits_{{s_0} +  \ldots  + {s_{i+2l}} = k-l}^{} 
{{\Lambda^{{z_0}}_{v_0}}({2s_0}, \ldots ,{2s_{i+2l }})}.$$
Moreover, 

$$ |\Lambda^{z_0}_{v_0}(l)|= \sum \limits_{{s_0} +  \ldots  + {s_{i+2l}} = k-l}
{|{\Lambda^{{z_0}}_{v_0}}({2s_0}, \ldots ,{2s_{i+2l }})|} .$$
\end{corollary}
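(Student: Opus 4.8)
The plan is to deduce the statement directly from the unique decomposition of Remark \ref{s13}, in exactly the same way Corollary \ref{s12} was read off from the decomposition in Remark \ref{s14}. The only genuine content beyond Remark \ref{s13} is the bookkeeping needed to identify the correct index set and to establish disjointness of the union.

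First I would fix a walk $\W \in \Lambda^{z_0}_{v_0}(l)$ and invoke Remark \ref{s13} to write it uniquely as $\W = \W_0 \W_1 \cdots \W_{i+2l}$, where the $\W_j$ are the closed subwalks delimited by the marked occurrences $u_0, u_1, \ldots, u_{i+2l+1}$ of the vertices of $\W^{\ast\ast}$. Put $2s_j = \ell(\W_j) - 1$ for $0 \le j < i+2l$ and $2s_{i+2l} = \ell(\W_{i+2l})$. Each $s_j$ is then a nonnegative integer: for $j < i+2l$ the subwalk $\W_j$ joins the two distinct adjacent vertices $u_j$ and $u_{j+1}$ of $\W^{\ast\ast}$, so $\ell(\W_j)$ is a positive odd number, while $\ell(\W_{i+2l})$ is a nonnegative even number since $\W_{i+2l}$ is a closed walk. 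The length computation at the end of Remark \ref{s13} gives $s_0 + \cdots + s_{i+2l} = k - l$. By the definition of $\Lambda^{z_0}_{v_0}(2s_0, \ldots, 2s_{i+2l})$ this exhibits $\W$ as a member of that set, whence $\Lambda^{z_0}_{v_0}(l) \subseteq \bigcup_{s_0 + \cdots + s_{i+2l} = k-l} \Lambda^{z_0}_{v_0}(2s_0, \ldots, 2s_{i+2l})$.

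The reverse inclusion is immediate from the definition: each set $\Lambda^{z_0}_{v_0}(2s_0, \ldots, 2s_{i+2l})$ was defined as a collection of walks already lying in $\Lambda^{z_0}_{v_0}(l)$, so the union on the right-hand side is automatically contained in $\Lambda^{z_0}_{v_0}(l)$. (One may also note, as a consistency check, that a walk with such a decomposition has length $i + 2l + 2(s_0 + \cdots + s_{i+2l}) = i + 2l + 2(k-l) = i + 2k$, consistent with membership in $\Lambda(i,2k)$.) This establishes the asserted set equality.

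It remains to obtain the cardinality formula, for which I would argue that the union is disjoint. This is precisely where the uniqueness clause of Remark \ref{s13} is used: the decomposition $\W = \W_0 \cdots \W_{i+2l}$ attached to a given $\W$ is uniquely determined, hence so is the tuple $(s_0, \ldots, s_{i+2l})$ extracted from it; a walk therefore cannot belong to $\Lambda^{z_0}_{v_0}(2s_0, \ldots, 2s_{i+2l})$ for two distinct tuples, and the sets appearing in the union are pairwise disjoint. Summing cardinalities over the finitely many tuples with $s_0 + \cdots + s_{i+2l} = k - l$ then yields the displayed identity. I do not expect any real obstacle here; the one place deserving a sentence of care is the well-definedness and nonnegativity of the $s_j$ noted above, and everything else is routine once Remark \ref{s13} is in hand.
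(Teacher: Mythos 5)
Your argument is correct and is essentially the paper's own: the corollary is read off directly from the unique decomposition $\W=\W_0\cdots\W_{i+2l}$ of Remark \ref{s13}, with uniqueness giving the disjointness needed for the cardinality formula. The extra bookkeeping you include (parity and nonnegativity of the $s_j$, the two inclusions) is just a more explicit write-up of the same one-line justification the paper gives before stating the corollary.
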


Now, we can state the following lemma.
\begin{lemma} \label{s9}
For $s_0+\ldots+s_{i+2l}=k-l.$ If $l$ is even or odd, then we have
$$|{\Lambda^{{z_0}}_{v_0}}({2s_0}, \ldots ,{2s_{i+2l }})| =b_{ c,2s_{ i +2l} } 
\prod\limits_{j =0}^{\frac{i}{2}+l-1} {t_{ c,2s_{2 j} } t_{ d,2s_{2 j+1} },}$$ 
and 
$$ |{\Lambda^{{z_0}}_{v_0}}({2s_0}, \ldots ,{2s_{i+2l }})| =
b_{ d,2s_{ i +2l} } \prod\limits_{j =0}^{\frac{i}{2}+l-1} {t_{ d,2s_{2 j} } t_{ c,2s_{2 j+1} },}$$ 
respectively.
\end{lemma}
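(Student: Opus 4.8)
The plan is to transcribe the proof of Lemma~\ref{s6} almost verbatim, the only genuinely new ingredient being a parity count of the vertex degrees along the path $P_{z_0v_0}$. I would start from a walk $\W\in\Lambda^{z_0}_{v_0}(2s_0,\ldots,2s_{i+2l})$ and invoke Remark~\ref{s13} together with Corollary~\ref{s15}: the unique factorization $\W=\W_0\W_1\cdots\W_{i+2l}$ along the skeleton $\W^{\ast\ast}$ — whose vertex sequence is $u_0,u_1,\ldots,u_{i+2l}$ with $u_0=z_0=u_{i+2l}$ and $u_l=v_0$ — is a bijection between $\Lambda^{z_0}_{v_0}(2s_0,\ldots,2s_{i+2l})$ and the set of tuples $(\W_0,\ldots,\W_{i+2l})$ in which, for $0\le j<i+2l$, the subwalk $\W_j$ is a closed cycle-free walk of length $2s_j$ based at $u_j$ and backward with respect to $\W^{\ast\ast}$, followed by the edge $u_ju_{j+1}$, while $\W_{i+2l}$ is an (unrestricted) closed cycle-free walk of length $2s_{i+2l}$ based at $u_{i+2l}=z_0$. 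Consequently $|\Lambda^{z_0}_{v_0}(2s_0,\ldots,2s_{i+2l})|$ equals the product over $j=0,\ldots,i+2l$ of the number of admissible $\W_j$.

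Next I would evaluate each factor exactly as in Lemma~\ref{s6}. Since $\sum_j s_j=k-l$ and $k<g-\frac{i}{2}$, every piece satisfies $\ell(\W_j)=2s_j<g\le i$, so it is too short to contain a cycle and, when $u_j$ lies on $\C$, too short to wind around $\C$ avoiding the one forbidden edge; hence each such backward closed cycle-free walk is just a closed cycle-free walk in the related tree rooted at $u_j$, counted by $t_{c,2s_j}$ or $t_{d,2s_j}$ according as $d(u_j)=c$ or $d(u_j)=d$, and $\W_{i+2l}$ is counted by $b_{c,2s_{i+2l}}$ or $b_{d,2s_{i+2l}}$ according as $d(z_0)=c$ or $d(z_0)=d$.

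The crucial and most delicate step is the degree bookkeeping that produces the two cases. Because $u_0,u_1,\ldots,u_{i+2l}$ is a walk in the bipartite graph $\B_{c,d}$ and $d(u_l)=d(v_0)=c$, one has $d(u_m)=c$ precisely when $m\equiv l\pmod 2$; in particular $d(z_0)=d(u_0)=d(u_{i+2l})=c$ when $l$ is even and $=d$ when $l$ is odd, and for $l$ even the even-indexed vertices $u_{2j}$ have degree $c$ and the odd-indexed $u_{2j+1}$ have degree $d$, while for $l$ odd these roles are swapped. Grouping the $i+2l$ factors for $j<i+2l$ into the $\frac{i}{2}+l$ consecutive pairs $(u_{2j},u_{2j+1})$ and appending the factor for $\W_{i+2l}$, I would obtain $b_{c,2s_{i+2l}}\prod_{j=0}^{i/2+l-1}t_{c,2s_{2j}}t_{d,2s_{2j+1}}$ when $l$ is even and the same expression with $c$ and $d$ interchanged when $l$ is odd. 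The only point I would pause on is that the reconstruction map genuinely lands in $\Lambda^{z_0}_{v_0}(2s_0,\ldots,2s_{i+2l})$ — notably that $v_0$ remains the first vertex of $\C$ to appear — but this is forced by the backward conditions exactly as in Remarks~\ref{s14} and~\ref{s13}, so I would only remark on it; the real content is the parity identity $d(u_m)=c\iff m\equiv l\pmod 2$, everything else being a transcription of the in-cycle case.
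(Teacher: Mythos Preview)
Your approach mirrors the paper's: decompose along the skeleton $\W^{\ast\ast}$ via Remark~\ref{s13}, count each factor by $t$'s or $b$'s according to degree, and use bipartiteness to track parities. The explicit parity bookkeeping $d(u_m)=c\iff m\equiv l\pmod 2$ is a nice addition that the paper leaves implicit.

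There is, however, one genuine gap. You write that the reconstruction landing in $\Lambda^{z_0}_{v_0}(2s_0,\ldots,2s_{i+2l})$---in particular that $v_0$ is the first vertex of $\C$ to appear---is ``forced by the backward conditions exactly as in Remarks~\ref{s14} and~\ref{s13}.'' It is not. Those remarks only describe the decomposition of a walk \emph{already assumed} to lie in the set; they say nothing about arbitrary tuples $(\W_0,\ldots,\W_{i+2l})$. The backward condition at $u_r=z_r$ for $r<l$ forbids only the single edge $z_rz_{r+1}$; it does not prevent the closed cycle-free subwalk $\W_r$ from wandering to some $v_j\in\C$ with $j\neq 0$. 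What actually rules this out is a girth argument, which the paper supplies: if such a $v_j$ were reached, one would obtain a closed walk through $z_0$, $v_j$, and $v_0$ of length at most $l+(k-l)+\tfrac{i}{2}=k+\tfrac{i}{2}<g$ (using $d(z_0,v_0)=l$, $d(z_0,v_j)\le k-l$, and $d(v_0,v_j)\le \tfrac{i}{2}$), hence a cycle shorter than the girth, a contradiction. Your earlier observation that each $2s_j<g$ is in the right spirit but does not by itself close this; you need the triangle-inequality/girth step explicitly.
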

\begin{proof}
Let $\W \in {\Lambda^{{z_0}}_{v_0}}({2s_0}, \ldots ,{2s_{i+2l }}).$ 
Then Remark \ref{s13} implies that there are the unique subwalks $\W_j$'s such that $\W= \W_1 \cdots \W_{i+2l}$ 
and $2s_j=\ell(\W_j)-1,$ for $0 \leq j < i+2l$ and $2s_{i+2l}=\ell(\W_{i+2l}).$ Hence, we can deduce that  
for $0 \leq j < i+2l,$ $2s_j$ is the length of backward closed cycle-free walk 
respect to the $\W^{\ast \ast}$
with $(j+1)$-th vertex of $\W^{\ast \ast}$ as initial. 
Hence, the number of cycle-free walks in this case is equal to $t_{c, 2s_j}$ or $t_{d, 2s_j},$ depends on the degree
initial vertex is $c$ or $d,$ respectively.
In addition, since $2s_{i+2l}=\ell(\W_{i+2l}),$ we conclude that $2s_{i+2l}$ 
is the length of closed cycle-free walk with 
initial vertex $z_0.$ Therefore, the number of cycle-free walks in this case 
is $b_{c,2s_{i+2l}}$ or $b_{d,2s_{i+2l}},$ if $l$ is even or odd, respectively.
Now, it is enough to show that the walks which computed in the right side of the equation are elements of 
${\Lambda^{{z_0}}_{v_0}}({2s_0}, \ldots ,{2s_{i+2l }}). $ Since the initial vertex of the walks is $z_0,$ we just 
check that $v_0$ is the first vertex of $\C$ that appears in our enumeration. By contradiction assume that 
$\W' \in {\Lambda^{{z_0}}_{v_0}}({2s_0}, \ldots ,{2s_{i+2l }})$ and 
vertex $v_j, 0<j \leq i-1,$ 
appears before the $v_0$ in $\W'.$ Since $d(z_0,v_0)=l,$ we have $d(z_0,v_j) \leq k-l.$ Hence, there 
is a new cycle of length at most 
$$d(z_0,v_0)+ d(z_0,v_j)+ d(v_0,v_j)\leq l+(k-l)+\frac{i}{2}.$$ 
Since $i+2k <2g,$ the length of cycle is less than $g,$ which is a contradiction.
\end{proof}
\begin{lemma} \label{s3}
Let $l$ be a positive integer with $1 \leq l \leq k$ and $z_0 \in N_{l}(v_0).$ Then
$$|\w({v_0},l) |=  |N_{l}(v_0)| |\Lambda^{z_0}_{v_0}(l)|$$  
\end{lemma}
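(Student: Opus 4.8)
The plan is to show that $\w(v_0,l)$ decomposes as a disjoint union, indexed by the initial vertices $z_0 \in N_l(v_0)$, of the sets $\Lambda^{z_0}_{v_0}(l)$, and that each of these sets has the same cardinality, namely $|\Lambda^{z_0}_{v_0}(l)|$ for a fixed (arbitrary) choice of $z_0$. The first claim is essentially definitional: by definition of $\w(v_0,l)$, every $\W \in \w(v_0,l)$ has an initial vertex $z$ lying in $V(G-\C)$ with $d(z,v_j)=l$ forced to be $d(z,v_0)=l$ by condition (iii) (since $v_0$ is the first vertex of $\C$ on the walk), so $z$ is exactly one of the elements of $N_l(v_0)$; and a walk with initial vertex $z_0$ in $\w(v_0,l)$ is by definition a member of $\Lambda^{z_0}_{v_0}(l)$. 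Two walks with distinct initial vertices cannot coincide, so the union $\w(v_0,l) = \bigcup_{z_0 \in N_l(v_0)} \Lambda^{z_0}_{v_0}(l)$ is disjoint, giving $|\w(v_0,l)| = \sum_{z_0 \in N_l(v_0)} |\Lambda^{z_0}_{v_0}(l)|$.

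The substance of the proof is therefore the equicardinality: $|\Lambda^{z_0}_{v_0}(l)|$ is independent of the choice of $z_0 \in N_l(v_0)$. I would establish this by exhibiting, for any two vertices $z_0, z_0' \in N_l(v_0)$, a bijection between $\Lambda^{z_0}_{v_0}(l)$ and $\Lambda^{z_0'}_{v_0}(l)$. The natural candidate comes from Corollary \ref{s15} and Lemma \ref{s9}: both sets break up according to the partition data $(2s_0,\ldots,2s_{i+2l})$ with $s_0 + \cdots + s_{i+2l} = k-l$, and Lemma \ref{s9} gives the count of $|\Lambda^{z_0}_{v_0}(2s_0,\ldots,2s_{i+2l})|$ as a product of $t$'s and one $b$-term that depends only on the parity of $l$ and on the degrees along the path $P_{z_0 v_0}$ and the cycle $\C$ — not on which particular vertex $z_0$ is. Since any two vertices in $N_l(v_0)$ sit at the same distance $l$ from $v_0$ and, by the bipartite structure together with $d(v_0)=c$, have the same degree and the same sequence of degrees along their (unique, cycle-free) connecting path to $v_0$, the formula of Lemma \ref{s9} yields identical values for every partition tuple. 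Summing over all tuples with $s_0 + \cdots + s_{i+2l} = k-l$ then gives $|\Lambda^{z_0}_{v_0}(l)| = |\Lambda^{z_0'}_{v_0}(l)|$.

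Finally, combining the disjoint-union formula with the equicardinality: pick any fixed $z_0 \in N_l(v_0)$; then
$$|\w(v_0,l)| = \sum_{z_0' \in N_l(v_0)} |\Lambda^{z_0'}_{v_0}(l)| = |N_l(v_0)| \cdot |\Lambda^{z_0}_{v_0}(l)|,$$
which is the assertion. The step I expect to be the main obstacle is the equicardinality claim: one must argue carefully that the walk-counting formula of Lemma \ref{s9} genuinely depends only on $l$ (through its parity and the positions $\frac{i}{2}+l-1$) and on the fixed degree pattern of $\B_{c,d}$, and not on any finer feature of the chosen initial vertex $z_0$. Concretely, this amounts to checking that for every $z_0 \in N_l(v_0)$ the degree of the $(j+1)$-th vertex of $\W^{\ast\ast}$ is the same function of $j$, which follows from the bipartiteness of $\B_{c,d}$ and the convention $d(v_0)=c$; once this is in place, the counts agree term by term and the sum is independent of $z_0$.
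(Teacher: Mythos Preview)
Your proposal is correct and follows essentially the same approach as the paper: decompose $\w(v_0,l)$ as the disjoint union $\bigcup_{z_0\in N_l(v_0)}\Lambda^{z_0}_{v_0}(l)$ and argue that all summands have the same cardinality. The paper's own proof is much terser---it simply notes that for any $w_0\in N_l(v_0)$ one has $d(w_0)=d(z_0)$, $d(w_0,v_0)=d(z_0,v_0)=l$, and $\Lambda^{z_0}_{v_0}(l)\cap\Lambda^{w_0}_{v_0}(l)=\emptyset$, and concludes equicardinality directly from this symmetry---whereas you justify the equicardinality more explicitly by invoking the counting formula of Lemma~\ref{s9} term by term; this extra detail is sound but not needed for the paper's level of rigor.
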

\begin{proof}
Suppose that $w_0 \in N_{l}(v_0)$ and $w_0 \neq z_0.$ Since $d(w_0)=d(z_0),$ $d(w_0, v_0)=d(z_0, v_0)=l$ and 
$\Lambda^{z_0}_{v_0}(l) \cap \Lambda^{w_0}_{v_0}(l)= \emptyset,$ we observe that 
$|\Lambda^{z_0}_{v_0}(l)|= |\Lambda^{w_0}_{v_0}(l)|.$ Hence, we only
compute $|\Lambda^{z_0}_{v_0}(l)| $ and finally multiply by 
$|N_{l}(v_0)|.$  
\end{proof}
\begin{lemma}
The number of walks $\W \in \Lambda(i,2k)$ with cycle $\C$ 
such that $v_0$ is the first vertex of $\C$ that appears in $\W$ is
$$ \sum\limits_{l=1}^{k} {|\w({v_0},l) |} $$  
\end{lemma}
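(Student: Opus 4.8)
The plan is to realize the set in question as a disjoint union of the sets $\w(v_0,l)$ and then simply add cardinalities. Write $S$ for the collection of all $\W\in\Lambda(i,2k)$ whose cycle is $\C$ and for which $v_0$ is the first vertex of $\C$ occurring in the walk sequence of $\W$; the goal is to prove $S=\bigsqcup_{l=1}^{k}\w(v_0,l)$. First I would check both inclusions together with disjointness at the level of the defining conditions. Every $\W\in S$ has, by the definition of $\Lambda(i,2k)$, an initial vertex $z\in V(G-\C)$; since $z\notin V(\C)$ while $v_0\in V(\C)$ we have $z\neq v_0$, so $l:=d(z,v_0)\geq 1$. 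For this $l$ the walk $\W$ satisfies condition (i) of $\w(v_0,l)$ because $z\in V(G-\C)$, condition (ii) by the very choice $l=d(z,v_0)$, and condition (iii) because $v_0$ is the first vertex of $\C$ appearing in $\W$; hence $\W\in\w(v_0,l)$. Conversely, any walk of $\w(v_0,l)$ lies in $\Lambda(i,2k)$, has cycle $\C$, and has $v_0$ as its first cycle vertex by condition (iii), so it lies in $S$. The union is disjoint because the index $l$ of the (unique) set containing a given $\W$ is recovered from $\W$ itself as $d(z,v_0)$, so a walk cannot lie in $\w(v_0,l)$ and $\w(v_0,l')$ for $l\neq l'$.

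It then remains to confirm that only the values $1\leq l\leq k$ actually occur, i.e. that no $\W\in S$ has $d(z,v_0)>k$ (for $l>k$ the symbol $\w(v_0,l)$ is not even defined, and must be seen to be vacuous). This is the one substantive point. I would obtain it from the decomposition of Remark \ref{s13}: applied to $\W\in S$ relative to $\W^{\ast\ast}$, the once-traversed cycle-plus-tail attached to $z$ along its unique $l$-path $z=z_0\sim z_1\sim\cdots\sim z_{l-1}\sim v_0$, that decomposition writes $\W=\W_0\W_1\cdots\W_{i+2l}$ and yields $\ell(\W)=(i+2l)+2(s'_0+\cdots+s'_{i+2l})$ with every $s'_j\geq 0$; since $\ell(\W)=i+2k$ this forces $s'_0+\cdots+s'_{i+2l}=k-l\geq 0$, hence $l\leq k$. (Alternatively one can argue directly: the initial segment of $\W$ up to its first visit to $v_0$ has length at least $l$, the edges of $\C$ are all traversed strictly between the first and last visits of $\W$ to $V(\C)$ so that middle segment has length at least $i$, and a return from the last cycle vertex back to $z$ of length less than $l$ would, combined with $d(z,v_0)=l$ and $d(v_0,v_j)\le i/2$, produce a closed walk shorter than the girth as in the proof of Lemma \ref{s9}, using $i+2k<2g$.) I expect the Remark \ref{s13} route to be the cleanest to write down, since that machinery is already in place.

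With the disjoint decomposition $S=\bigsqcup_{l=1}^{k}\w(v_0,l)$ established, additivity of cardinality over a disjoint union gives $|S|=\sum_{l=1}^{k}|\w(v_0,l)|$, which is exactly the assertion. The only delicate step is the range bound $l\leq k$; the verification of the three membership conditions and of disjointness is routine bookkeeping against the definition of $\w(v_0,l)$, so I would keep those brief and spend the write-up on the length accounting.
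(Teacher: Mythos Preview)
Your proposal is correct and follows essentially the same approach as the paper: show that the set in question is the disjoint union $\bigsqcup_{l=1}^{k}\w(v_0,l)$ and add cardinalities. The paper's proof is much terser---it simply asserts $1\leq d(z_0,v_0)\leq k$ without argument and notes disjointness---so your write-up, particularly the length-accounting (via Remark~\ref{s13} or the direct girth argument) that justifies the upper bound $l\leq k$, supplies detail the paper omits.
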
\label{s7}
\begin{proof}
Let $\W \in \Lambda(i,2k)$ with cycle $\C$ and initial vertex $z_0$ such that the first vertex of $\C$ that appears in 
$\W$ is $v_0.$ In this case, $1 \leq d(z_0,v_0) \leq k.$ Hence $\W \in \w({v_0},l),$ where 
$l= d(z_0,v_0).$ Since $\w({v_0},l) \cap \w({v_0},l')=\emptyset,$ for distinct $l$ and $l',$ the assertion holds.
\end{proof}
Note that we have $\frac{i}{2}$ vertices of degree $c$ and $\frac{i}{2}$ vertices of degree $d$ in $\C.$ 
In addition, for $0 \leq j \leq \frac{i}{2}-1,$ we have 
$$|\w(v_0, l)|=|\w(v_{2j}, l)|,$$ 
and 
$$|\w(v_1, l)|=|\w(v_{2j+1}, l)|.$$ 
Hence, we have the following consequence. 

\begin{corollary} \label{s2}
For a positive integer $k$ with $1 \leq k < g-\frac{i}{2},$ we have
$$|\Lambda(i,2k)|= i N_i \sum\limits_{l=1}^{k} {(|\w({v_0},l) |+|\w({v_1},l) |)}.$$
\end{corollary}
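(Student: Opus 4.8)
The plan is to assemble Corollary~\ref{s2} from the three ingredients already in place: Lemma~\ref{s7} (really the unlabeled lemma preceding this corollary), Lemma~\ref{s3}, and the symmetry observations about the sets $\w(v_j,l)$. First I would fix a single cycle $\C$ of length $i$ with the vertices labelled $v_0,\dots,v_{i-1}$ as in Fig.~1, and count the walks in $\Lambda(i,2k)$ whose cycle is exactly $\C$. Every such walk $\W$ has a well-defined \emph{first vertex of $\C$ to appear in the walk sequence}; call it $v_j$. This partitions the CWDCCs with cycle $\C$ and initial vertex outside $\C$ into $i$ classes according to $j\in\{0,\dots,i-1\}$, and by the preceding lemma the $j$-th class has size $\sum_{l=1}^{k}|\w(v_j,l)|$, since $d(z_0,v_j)$ ranges over $1,\dots,k$ and the sets $\w(v_j,l)$ are disjoint for distinct $l$.

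Next I would invoke the displayed symmetry $|\w(v_{2j},l)|=|\w(v_0,l)|$ and $|\w(v_{2j+1},l)|=|\w(v_1,l)|$ for $0\le j\le \tfrac i2-1$, which holds because rotating $\C$ by an even amount is a graph automorphism of the rooted structure attached to $\C$ that preserves degrees; the even-indexed vertices all have degree $c$ and the odd-indexed ones degree $d$. Summing the $i$ classes thus collapses to $\tfrac i2$ copies of $\sum_{l=1}^{k}|\w(v_0,l)|$ plus $\tfrac i2$ copies of $\sum_{l=1}^{k}|\w(v_1,l)|$, giving the count of $\Lambda$-walks with cycle exactly $\C$ as
$$\frac{i}{2}\sum_{l=1}^{k}\bigl(|\w(v_0,l)|+|\w(v_1,l)|\bigr).$$
Here I should also note (as the paper does implicitly, and as is needed for the CWDCC-to-CWWC passage) that each closed walk with cycle is counted once as a CWDCC traversed in one of its two directions; but since $\Lambda(i,2k)$ already counts walks with orientation, or equivalently since every CWWC contributes two directed walks and we are summing over both, the factor combining $\tfrac i2$ with the two directions produces the $i$ in the final formula — this matches exactly the bookkeeping already used in the corollary following Lemma~\ref{s6}.

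Finally, since the cycle $\C$ was an arbitrary cycle of length $i$ and there are $N_i$ such cycles, and since two distinct cycles give disjoint sets of walks in $\Lambda(i,2k)$ (a walk in $\Lambda(i,2k)$ contains a unique cycle of length $i$, as $i+2k<2g$ forbids a second short cycle), I would multiply by $N_i$ to obtain $|\Lambda(i,2k)|=iN_i\sum_{l=1}^{k}(|\w(v_0,l)|+|\w(v_1,l)|)$. The main obstacle — or at least the only place requiring genuine care rather than bookkeeping — is justifying that every walk of $\Lambda(i,2k)$ has a unique underlying $i$-cycle and that the ``first vertex of $\C$'' decomposition is exhaustive and orientation-consistent; this rests on the girth bound $i+2k<2g$, which was already exploited in the proof of Lemma~\ref{s9} to rule out spurious short cycles, so the argument there can be reused essentially verbatim. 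Everything else is combining the cited lemmas with the two symmetry identities and the factor-of-$i$ convention established earlier.
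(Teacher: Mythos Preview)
Your proposal is correct and follows essentially the same approach as the paper: fix a cycle $\C$, use the preceding lemma to count walks by first cycle-vertex, invoke the displayed symmetries $|\w(v_{2j},l)|=|\w(v_0,l)|$ and $|\w(v_{2j+1},l)|=|\w(v_1,l)|$ to collapse the $i$ classes to $\tfrac{i}{2}$ of each type, double for the two traversal directions (the same convention used before Corollary~\ref{s4}), and finally multiply by $N_i$. The paper states the corollary as an immediate consequence of the text preceding it without writing out a separate proof, so your version simply makes the implicit bookkeeping explicit.
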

\begin{remark} \label{s8}
It is not difficult to see that to find $|\Lambda(i,2k)|,$ we may only calculate $ |\w({v_0},l) |$ for $1 \leq l \leq k.$ Because 
$\B_{c,d}$ is a bi-regular graph and we have similar result for $ |\w({v_1},l) |.$ Since we know the 
value $|N_{l}(v_0)|,$ it is enough to check the $|\Lambda^{z_0}_{v_0}(l)|$ to find $ |\w({v_0},l) |$
 for $1 \leq l \leq k.$ 
\end{remark}
For finding the number of cycles of length $i$ in $\B_{c,d},$ it is enough to investigate the value $\Psi _{i}(c,d,\B_{c,d}),$ by Theorem 
\ref{s16}.
In the next two theorems we enumerate $\Psi _{j}(c,d,\B_{c,d})$ for $j=g+2, g+4.$ Our proof is simpler 
than the proof of Theorem $2$ and Theorem $3$ in \cite{dehghan}. Finally, we compute  $\Psi _{g+6}(c,d,\B_{c,d})$.
\begin{theorem} \label{s5}
Let $G$ be a $(c, d)$-regular graph. Then
$$\Psi _{g+2}(c,d,\B_{c,d})= g N_{g} (g+2)(c+d-2).$$
\end{theorem}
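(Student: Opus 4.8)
The plan is to apply the machinery of Sections~3 and~4 to the special case $i=g$, $k=1$, so that $\Psi_{g+2}(c,d,\B_{c,d})=|\Phi(g,2)|+|\Lambda(g,2)|$. I would first compute $|\Phi(g,2)|$ using Corollary~\ref{s4}: here $k=1$, so the sum runs over $(s_0,\ldots,s_g)$ with $s_0+\cdots+s_g=1$, i.e.\ exactly one index equals $1$ and the rest are $0$. Since $t_{c,0}=t_{d,0}=1$ and $b_{c,0}=b_{d,0}=1$, each term $\prod t_{d,2s_{2j}}t_{c,2s_{2j+1}}$ and $(b_{c,2s_g}+b_{d,2s_g})$ collapses: the term where $s_g=1$ contributes $b_{c,2}+b_{d,2}$, and the $g$ terms where some $s_j=1$ with $j<g$ each contribute $t_{c,2}$ or $t_{d,2}$ (half of them each), times $b_{c,0}+b_{d,0}=2$. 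Then $|\Phi(g,2)|=gN_g$ times this short sum. I expect this to reduce, after using the elementary identities $t_{c,2}=d-1$, $t_{d,2}=c-1$ (the related trees have the root of degree $c-1$, resp.\ $d-1$) and $b_{c,2}=c$, $b_{d,2}=d$ (a closed walk of length $2$ from a vertex of degree $c$ just goes out and back along any of its $c$ edges), to a clean polynomial in $c,d$.

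Next I would compute $|\Lambda(g,2)|$ via Corollary~\ref{s2}: with $k=1$ the only value of $l$ is $l=1$, so $|\Lambda(g,2)|=iN_i(|\w(v_0,1)|+|\w(v_1,1)|)$ with $i=g$. By Lemmas~\ref{s3} and~\ref{s9} (case $l=1$, hence $l$ odd, and $k-l=0$ forces all $s_j=0$), we get $|\w(v_0,1)|=|N_1(v_0)|\cdot|\Lambda^{z_0}_{v_0}(1)|=|N_1(v_0)|\cdot 1$, since with $k-l=0$ there is exactly one walk (all backward excursions trivial). The count $|N_1(v_0)|$ is given in Section~4 as $(c-2)(d-1)^{\lceil 0\rceil}(c-1)^{\lfloor 0\rfloor}=c-2$ when $d(v_0)=c$, and symmetrically $|N_1(v_1)|=d-2$. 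So $|\Lambda(g,2)|=gN_g\big((c-2)+(d-2)\big)=gN_g(c+d-4)$.

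Finally I would add the two contributions. From the $\Phi$ side I anticipate $|\Phi(g,2)|=gN_g\big(b_{c,2}+b_{d,2}+\tfrac{g}{2}\cdot 2\cdot(t_{c,2}+t_{d,2})/?\big)$ — more carefully, $\tfrac{g}{2}$ of the indices $j<g$ are even (degree $c$, contributing $t_{d,2}$ in the $v_0$-product) and $\tfrac{g}{2}$ are odd, so the $v_0$-sum gives $(b_{c,2}+b_{d,2})+2\cdot\tfrac{g}{2}(t_{c,2}+t_{d,2})$... and one must be careful that Corollary~\ref{s4} already packages both $v_0$ and $v_1$ via $(b_{c,2s_i}+b_{d,2s_i})$ and the single product $t_{d}t_{c}$. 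Substituting $t_{c,2}+t_{d,2}=(d-1)+(c-1)=c+d-2$ and $b_{c,2}+b_{d,2}=c+d$, and combining with the $\Lambda$ term $gN_g(c+d-4)$, the target $gN_g(g+2)(c+d-2)$ should emerge after expanding $(g+2)(c+d-2)=g(c+d-2)+2(c+d-2)=g(c+d-2)+(c+d)+(c+d-4)$, which is exactly the three pieces: the $g$ walks with a nontrivial backward excursion on the cycle contribute $g(c+d-2)$, the one walk with the excursion "at $v_0$" (returning once) contributes $c+d$, and the $\Lambda$ part contributes $c+d-4$.

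The main obstacle I foresee is bookkeeping the combinatorial factors correctly — in particular making sure that the global factor $iN_i$ in Corollaries~\ref{s4} and~\ref{s2} (the $i$ from the $i$ rotations of the base vertex and the $2$ directions, against the $\tfrac{i}{2}$ vertices of each degree) is not double-counted when I also split the inner sum into "even $j$" and "odd $j$" contributions, and verifying the small-case identities $t_{c,2}=d-1$, $b_{c,2}=c$ etc.\ are consistent with the conventions $b_{c,0}=t_{c,0}=1$ fixed in Section~2. Once those normalizations are pinned down the algebra is a one-line expansion.
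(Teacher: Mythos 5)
Your proposal follows essentially the same route as the paper's own proof: compute $|\Phi(g,2)|$ from Corollary~\ref{s4} with $k=1$ (the single nonzero $s_j$ sitting either at one of the $g$ cycle positions, giving $g(t_{c,2}+t_{d,2})$, or at $s_g$, giving $b_{c,2}+b_{d,2}$), compute $|\Lambda(g,2)|$ from Corollary~\ref{s2} with $l=1$ and $|\Lambda^{z_0}_{v_0}(1)|=1$, so that it equals $gN_g(|N_1(v_0)|+|N_1(v_1)|)=gN_g(c+d-4)$, and add. The only slip is the swap $t_{c,2}=d-1$, $t_{d,2}=c-1$ (since the root of $\T_{c,d}$ has degree $c-1$, in fact $t_{c,2}=c-1$ and $t_{d,2}=d-1$), but as only the symmetric sums $t_{c,2}+t_{d,2}=c+d-2$ and $b_{c,2}+b_{d,2}=c+d$ enter, your final identity $gN_g\bigl(g(c+d-2)+(c+d)+(c+d-4)\bigr)=gN_g(g+2)(c+d-2)$ is exactly the paper's computation.
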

\begin{proof}
To compute $\Psi _{g+2}(c,d,\B_{c,d}),$ we calculate the values $|\Phi(g,2k)|$ and $|\Lambda(g,2k)|,$ respectively. Since $k=1,$ 
Corollary \ref{s4} implies that  
$$|\Phi(g,2)|= g N_{g} \sum\limits_{{s_0} +  \ldots  + {\rm{ }}{s_{g }} = 1} {(b_{ c,2s_{ g } }+ b_{ d,2s_{ g } })}  
\prod\limits_{j =0}^{\frac{g}{2}-1} {t_{ d,2s_{2 j} } t_{ c,2s_{2 j+1} }} 
.$$ 
In this case, we have $g+1$ cases for $(s_0, \ldots, s_g).$ 
If $s_g=0,$ then $b_{c,0}+b_{d,0}=2.$ Since there are $\frac{g}{2}$ vertices of degree $c$ and $\frac{g}{2}$ vertices of degree $d,$ 
the number of CWDCCs in this case is $2 \frac{g}{2} 
t_{c,2} + 2\frac{g}{2} 
t_{d,2} .$ If $s_g =1,$ then the number of CWDCCs is $ b_{ c,2 }+$ $ b_{ d,2}.$ Therefore, 
$$|\Phi(g,2)|= g N_{g} \Big{(}g(t_{c,2}+t_{d,2}) +b_{c,2}+b_{d,2} \Big{)}.  \  \  \  \  \  \  \  (\ast)$$ 
By the corollary \ref{s2}, we have 
$$|\Lambda(g,2)|= g N_g \sum\limits_{l=1}^{1} {(|\w({v_0},l) |+|\w({v_1},l) |)}
$$ 
$$ \  \  \  \  \  \  \   = g N_g(|\w({v_0},1) |+|\w({v_1},1) |).$$  
It is enough to find the value $ |\Lambda^{z_0}_{v_0}(1)|,$ by Remark \ref{s8}. So
$$ |\Lambda^{z_0}_{v_0}(1)|= 
\sum\limits_{{s_0} +  \ldots  + {s_{g+2}} = 0}{|\Lambda^{{z_0}}_{v_0}({2s_0}, \ldots ,{2s_{g+2}})}|
=1.$$ 
Therefore, we deduce that $|\w({v_0},1) |=|N_1(v_0)|$ and  $|\w({v_1},1) |=|N_1(v_1)|.$ Thus,
$$|\Lambda(g,2)|=g N_g \Big{(}|N_1(v_0)|+|N_1(v_1)|\Big{)}. \  \  \  \  \  \  \  (\ast \ast)$$ 
By the equations $(\ast)$ and $(\ast \ast)$ we conclude that 
$$ \Psi _{g+2}(c,d,\B_{c,d})= g N_{g} \Big{(}g(t_{c,2}+t_{d,2}) +(b_{c,2}+b_{d,2}) +(|N_1(v_0)|+|N_1(v_1)|)\Big{)}.$$
\end{proof}
\begin{theorem} \label{s10}
Let $G$ be a $(c, d)$-regular graph. Then
$$\Psi _{g+4}(c,d,\B_{c,d})= (g+2) N_{g+2} (g+4)(c+d-2)$$ 
$$ \  \  \  \  \  \  \  \   \  \  \  \  \  \  \  \   \  \  \  \  \  \  \  \  \  \  \  \  \  + g N_g [g(c+d-2)^2+(c+d)(c+d-1)]$$
$$ \  \  \  \  \  \  \  \   \  \  \  \  \  \  \  \   \  \  \  \  \  \  \  \  \  \  \  \  \  \  \  \  \  \  \  \  \  \  \  \  \ 
\  \  \   \  \  \   \  \  \  \  \  \  \  \  +g N_g \Big{(}2 \left({\begin{array}{*{20}{c}}
{\frac{g}{2}}\\
{2}
\end{array}}\right) [(c-1)^2+(d-1)^2]+2(\frac{g}{2})^2 (c-1)(d-1) \Big{)}$$
$$ \  \  \  \  \  \  \  \  \  \  \  \  +g N_g \frac{g}{2}[(c+d)(c+d-2)]$$
$$ \  \  \  \  \  \  \  \   \  \  \  \  \  \  \  \   \  \  \  \  \  \  \  \  \  \  \  \  \  \  \  \  \  \  \  \  \  \  \  \  \ 
\  \  \   \  \  \  \  \  \   \  \  \  \  \   +g N_g \Big{(} (\frac{g}{2}+1) (c+d-2)(c+d-4)+(c-2 )(2d-1)+(d-2)(2c-1) \Big{)}.$$
\end{theorem}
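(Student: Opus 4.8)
The plan is to split $\Psi_{g+4}$ according to the length of the cycle supported by a closed walk with cycle: a walk of length $g+4$ supports a cycle of length $g$ or $g+2$, so the defining equation of $\Psi$ gives
$$\Psi_{g+4}(c,d,\B_{c,d})=|\Phi(g+2,2)|+|\Lambda(g+2,2)|+|\Phi(g,4)|+|\Lambda(g,4)|.$$
The first pair is computed by rerunning the proof of Theorem \ref{s5} with the girth $g$ replaced by $g+2$, i.e.\ applying Corollary \ref{s4} and Corollary \ref{s2} with $i=g+2$ and $k=1$ and using $t_{c,2}=c-1$, $t_{d,2}=d-1$, $b_{c,2}=c$, $b_{d,2}=d$, $|N_1(v_0)|=c-2$, $|N_1(v_1)|=d-2$, $|\Lambda^{z_0}_{v_0}(1)|=|\Lambda^{z_0}_{v_1}(1)|=1$; this yields $|\Phi(g+2,2)|+|\Lambda(g+2,2)|=(g+2)N_{g+2}(g+4)(c+d-2)$, the first line of the asserted formula.

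Next I would evaluate $|\Phi(g,4)|$ via Corollary \ref{s4} with $i=g$, $k=2$, which amounts to enumerating the compositions $s_0+\cdots+s_g=2$ into $g+1$ nonnegative parts, sorted by the value of $s_g$ and by which of the first $g$ coordinates are nonzero: (i) $s_g=2$; (ii) $s_g=1$ with one further coordinate equal to $1$; (iii) $s_g=0$ with two coordinates equal to $1$; (iv) $s_g=0$ with one coordinate equal to $2$. Since $g\ge 6$ (recall $g+4<2g$), every closed walk of length at most $4$ in $\B_{c,d}$ and in the related trees is cycle-free, so the relevant small quantities are explicit: besides $b_{c,0}=b_{d,0}=t_{c,0}=t_{d,0}=1$ and the values above, counting closed walks of length $4$ through a vertex and its neighbours gives $t_{c,4}=(c-1)(c+d-2)$, $t_{d,4}=(d-1)(c+d-2)$, $b_{c,4}=c(c+d-1)$ and $b_{d,4}=d(c+d-1)$. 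Keeping track that the product in Corollary \ref{s4} attaches $t_{d,\cdot}$ to the $g/2$ even cycle-positions and $t_{c,\cdot}$ to the $g/2$ odd ones, shapes (i) and (iv) contribute $(c+d)(c+d-1)+g(c+d-2)^2$, shape (ii) contributes $\frac g2(c+d)(c+d-2)$, and shape (iii) contributes $2\binom{g/2}{2}[(c-1)^2+(d-1)^2]+2(\tfrac g2)^2(c-1)(d-1)$; multiplying by $gN_g$ gives lines two, three and four.

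For $|\Lambda(g,4)|$ I would use Corollary \ref{s2} with $i=g$, $k=2$, so $|\Lambda(g,4)|=gN_g\sum_{l=1}^{2}(|\w(v_0,l)|+|\w(v_1,l)|)$, and Lemma \ref{s3} reduces each $|\w(v_j,l)|$ to $|N_l(v_j)|\,|\Lambda^{z_0}_{v_j}(l)|$, where $|N_1(v_0)|=c-2$, $|N_1(v_1)|=d-2$, $|N_2(v_0)|=(c-2)(d-1)$ and $|N_2(v_1)|=(d-2)(c-1)$. For $l=2$ the composition sum of Corollary \ref{s15} retains only the all-zero term, so $|\Lambda^{z_0}_{v_0}(2)|=|\Lambda^{z_0}_{v_1}(2)|=1$. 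For $l=1$, expanding $|\Lambda^{z_0}_{v_0}(1)|=\sum_{s_0+\cdots+s_{g+2}=1}|\Lambda^{z_0}_{v_0}(2s_0,\dots,2s_{g+2})|$ through the odd-$l$ case of Lemma \ref{s9}, the single ``$1$'' either sits in the terminal slot at $z_0$ (contributing $b_{d,2}=d$) or in one of the $g/2+1$ even resp.\ $g/2+1$ odd path-slots (contributing $d-1$ resp.\ $c-1$), so $|\Lambda^{z_0}_{v_0}(1)|=d+(\tfrac g2+1)(c+d-2)$ and symmetrically $|\Lambda^{z_0}_{v_1}(1)|=c+(\tfrac g2+1)(c+d-2)$. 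Multiplying by the $|N_l|$'s, summing over $l$ and over $v_0,v_1$, and regrouping via $(c-2)d+(d-2)c+(c-2)(d-1)+(d-2)(c-1)=(c-2)(2d-1)+(d-2)(2c-1)$ yields line five; adding the five pieces proves the theorem.

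The main obstacle is the twofold combinatorial bookkeeping — sorting the compositions of $2$ for $|\Phi(g,4)|$ and running the nested composition sum for $|\Lambda^{z_0}_{v_0}(1)|$ — carried out while tracking the alternation of the degrees $c,d$ along the cycle $\C$, and for $\Lambda$ also along the path $P_{z_0v_0}$. The one genuinely delicate point is that the slot at $z_0$ carries a $b_{\cdot,\cdot}$ rather than a $t_{\cdot,\cdot}$, with its degree ($c$ or $d$) governed by the parity of $l$ exactly as recorded in Lemma \ref{s9}; once the shapes are fixed, the remaining algebra — identities such as $(c-2)(2d-1)+(d-2)(2c-1)=4cd-5c-5d+4$ — is routine.
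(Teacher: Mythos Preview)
Your proposal is correct and follows essentially the same approach as the paper: split $\Psi_{g+4}$ according to whether the supported cycle has length $g$ or $g+2$, handle the $g+2$ part by invoking the proof of Theorem~\ref{s5} with $g$ replaced by $g+2$, and compute $|\Phi(g,4)|$ and $|\Lambda(g,4)|$ by enumerating the compositions of $k=2$ via Corollaries~\ref{s4}, \ref{s2}, \ref{s15} and Lemma~\ref{s9}. Your organization of the $|\Phi(g,4)|$ case analysis (sorting first by $s_g$, then by the pattern of the remaining coordinates) differs cosmetically from the paper's (sorting by whether the nonzero $s_j$'s are in $\{0,2\}$ or $\{0,1\}$), and you carry the explicit numerical substitutions $t_{c,2}=c-1$, $b_{c,4}=c(c+d-1)$, etc.\ a step further than the paper does, but the substance is identical.
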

\begin{proof}
To compute the number of CWWCs of the length $g+4,$ we need to know the number of cycles of length $g$ and $g+2$ which have 
already enumerated. Thus, we arise the following two cases:\\
\textbf{Case 1.} The closed walk of length $g+4$ contains a cycle of length $g.$ We first compute 
$|\Phi(g,4)|.$ In this case, $k=2$ and Corollary \ref{s4} implies that 
$$|\Phi(g,4)|= g N_{g} \sum\limits_{{s_0} +  \ldots  + {\rm{ }}{s_{g }} = 2} 
{(b_{ c,2s_{ g } }+ b_{ d,2s_{ g } })}  
\prod\limits_{j =0}^{\frac{g}{2}-1} {t_{ d,2s_{2 j} } t_{ c,2s_{2 j+1} }} 
.$$ 
Depending on $s_j$ is one or two, we consider the following subcases:\\ 
\textbf{Case 1.1.} Each $s_j$ is two or zero. In this case, if $s_g$ is zero or not, then the number of closed CWDCCs is equal to
$$2 \frac{g}{2} 
t_{c,4} + 2\frac{g}{2} 
t_{d,4},$$
or 
$$b_{c,4}+b_{d,4},$$ 
respectively. \\
\textbf{Case 1.2.}  Each $s_j$ is one or zero. Suppose that 
$s_g=0.$ In this sense, there are three cases to select two ones for $s_j$'s. If both vertices have the same degree, 
then the number of CWDCCs is 
$$  2 \left({\begin{array}{*{20}{c}}
{\frac{g}{2}}\\
{2}
\end{array}}\right) [t_{c,2}^2+t_{d,2}^2].
$$ 
If the degree of vertices are different, then the number of CWDCCs is
$$ 2(\frac{g}{2})^2 t_{c,2}t_{d,2}.$$
Now, suppose that $s_g \neq 0.$ In this case, $s_g=1$ and so there is another $j'$ such that $0 \leq j' < g$ and $s_j'=1.$ 
Hence, the number of CWDCCs is 
$$ \frac{g}{2}(b_{c,2}+b_{d,2})(t_{c,2}+t_{d,2}).$$
Thus, 
$$ \  \  \  \  \  \  \ |\Phi(g,4)|=g N_g \Big{(}g( 
t_{c,4} +  
t_{d,4})+ (b_{c,4}+b_{d,4})+\\
 2 \left({\begin{array}{*{20}{c}}
{\frac{g}{2}}\\
{2}

\end{array}}\right) [t_{c,2}^2+t_{d,2}^2]
$$ 

$$+
2(\frac{g}{2})^2 t_{c,2}t_{d,2}+
\frac{g}{2}(b_{c,2}+b_{d,2})(t_{c,2}+t_{d,2}) \Big{)}.
$$
Now, we want to find 
$|\Lambda(g,4)|. $
By the Corollary \ref{s2}, we have 
$$|\Lambda(g,4)|= g N_g \sum\limits_{l=1}^{2} {(|\w({v_0},l) |+|\w({v_1},l) |)}
$$ 
It is enough to find the values $|\Lambda^{z_0}_{v_0}(1)|$ and $|\Lambda^{z_0}_{v_0}(2)|,$ by Remark \ref{s8}. 
First consider $l=1$ and Lemma \ref{s9} implies that
$$ |\Lambda^{z_0}_{v_0}(1)|= 
\sum\limits_{{s_0} +  \ldots  + {s_{g+2}} = 1}{|{\Lambda^{{z_0}}_{v_0}}({2s_0}, \ldots ,{2s_{g+2 }})|}
$$ 
$$ \  \  \  \  \  \  \  \  \  \ =(\frac{g}{2}+1) t_{c,2}+ (\frac{g}{2}+1) t_{d,2}+ b_{d,2}.$$
Since $|\Lambda^{z_0}_{v_0}(2)|=1,$ we have 
$$ \  \  \  \  \  \  \  \  \  \  \   \ \  \ \\   
\  \  \  \  \  |\Lambda(g,4)|= g N_g \Big{(} (\frac{g}{2}+1) \big{(}|N_1(v_0)|+|N_1(v_1)| \big{)}(t_{c,2}+ t_{d,2})
$$
$$
\  \  \   \   \  \  \  \  \  \  \  \  \ \  \  \   \   \  \  \  \  \  \  \  \  \  \  \  \  \  \ +|N_1(v_0)|b_{d,2}+ 
|N_1(v_1)|b_{c,2}+N_{2}(v_0)+N_{2}(v_1) 
\Big{)}.$$ 
\textbf{Case 2.} 
The closed walk of length $g+4$ that contains a cycle of length $g+2.$ 
From the proof of Theorem \ref{s5}, the number of CWDCCs in this case is
$$ (g+2) N_{g+2} \Big{(}(g+2)(t_{c,2}+t_{d,2}) +(b_{c,2}+b_{d,2}) +(|N_1(v_0)|+|N_1(v_1)|)\Big{)}.$$
\end{proof}
Now, we find the value $\Psi _{g+6}(c,d,\B_{c,d})$ in the following three lemmas.
\begin{lemma}
Let $G$ be a $(c, d)$-regular graph. Then we have
$$
\  \  \  \ \  \  \  |\Phi(g,6)|=gN_g \Big{(}((c-1)^2+(d-1)^2) (c+d-2)\Big{[} g+
2 \left({\begin{array}{*{20}{c}}
{\frac{g}{2}}\\
{2}
\end{array}}\right)
+2 \left({\begin{array}{*{20}{c}}
{\frac{g}{2}}\\
{3}
\end{array}}\right)
\Big{]}$$
$$ \! \! \! \! \! \! \! \! \! \! \! \! \! \! \! \! \! \! \! \! \! + ((c-1)^2+(d-1)^2) (c+d)\Big{[} \left({\begin{array}{*{20}{c}}
{\frac{g}{2}}\\
{2}
\end{array}}\right)
+1
\Big{]}$$

$$ \  \  \  \  \  \ \  \  \  \  +
(c-1)(d-1)(c+d-2)
\Big{[}
3g+g^2+g \left({\begin{array}{*{20}{c}}
{\frac{g}{2}}\\
{2}
\end{array}}\right)
-2 \left({\begin{array}{*{20}{c}}
{\frac{g}{2}}\\
{3}
\end{array}}\right)
\Big{]}$$
$$ \! \! \! \! \! \! \! \! \! \! \! \! \! \!  + (c+d)
\Big{[}
(3cd-c-d)+\frac{g^2}{4} (c-1)(d-1) \Big{]} \Big{)},$$
and 
$$ \  \  \  \  \  \  \  \  \  \  \  \  \  \ \  \  \  \  \  \  \  \  \  \  |\Lambda(g,6)|= gN_g \Big{(}
(c+d-4) \Big{[}
(\frac{g}{2}+1)(c+d-2)^2
+ ((\frac{g}{2}+1)^2+1)(c-1)(d-1)$$
$$+ 
 \left({\begin{array}{*{20}{c}}
{\frac{g}{2}+1}\\
{2}
\end{array}}\right) 
((c-1)^2+(d-1)^2)
\Big{]}
 $$
$$ 
\  \  \  \  \  \  \  \  \  \ \  \  \ \ \  \  \  \  \  \  +(2cd-2c-2d) \Big{[} 
(c+d-1)+ (\frac{g}{2}+1)(c+d-2)
\Big{]}
$$
$$
\  \  \  \  \  \  \  \  \  \ \  \  \ \ \  \  \  \  \  \  \  \  +(\frac{g}{2}+2)(c+d-2)
\Big{[}
(c-2)(d-1)+(d-2)(c-1)
\Big{]}
$$
$$
\  \  +c(c-2)(d-1)+d(d-2)(c-1) \Big{)}.
$$

\end{lemma}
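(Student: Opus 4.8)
The plan is to compute $|\Phi(g,6)|$ and $|\Lambda(g,6)|$ separately, exactly as in the proofs of Theorem~\ref{s5} and Theorem~\ref{s10}, specializing the general machinery of Sections~3 and~4 to the case $i=g$, $k=3$. For $|\Phi(g,6)|$ I would start from Corollary~\ref{s4}, which gives
$$|\Phi(g,6)|= g N_{g} \sum\limits_{{s_0} +  \ldots  + {s_{g }} = 3} (b_{ c,2s_{ g } }+ b_{ d,2s_{ g } })  \prod\limits_{j =0}^{\frac{g}{2}-1} t_{ d,2s_{2 j} } t_{ c,2s_{2 j+1} },$$
and enumerate the compositions of $3$ among the $g+1$ slots $s_0,\dots,s_g$. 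There are four shapes: a single slot equal to $3$; one slot equal to $2$ and one equal to $1$; three slots equal to $1$; and the degenerate possibilities where one of these ones or twos sits in the last slot $s_g$. For each shape I would split according to whether $s_g=0$ or $s_g\neq 0$ (the last slot being governed by $b$-weights, the others by $t$-weights), and then further split by the degrees of the vertices carrying the nonzero $s_j$'s, since $\frac{g}{2}$ vertices of $\C$ have degree $c$ and $\frac{g}{2}$ have degree $d$. The number of ways to place $r$ distinguishable marks on same-degree vertices contributes a $\binom{g/2}{r}$, and mixed placements contribute products like $\frac{g}{2}\binom{g/2}{2}$ or $(\frac{g}{2})^2\cdot$ something; this is where the binomial coefficients $\binom{g/2}{2}$ and $\binom{g/2}{3}$ in the statement originate. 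Finally I would substitute the known small values $t_{c,2}=c-1$, $t_{d,2}=d-1$, $t_{c,4}=(c-1)(d-1)$ (and symmetrically), together with $b_{c,2}=c$, $b_{d,2}=d$, $b_{c,4}=c(2d-1)$, $b_{d,4}=d(2c-1)$, $b_{c,6},b_{d,6}$ from \cite{blake}, and collect terms into the grouped form displayed in the lemma.

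For $|\Lambda(g,6)|$ I would invoke Corollary~\ref{s2}, namely $|\Lambda(g,6)|= g N_g \sum_{l=1}^{3} (|\w({v_0},l) |+|\w({v_1},l) |)$, and by Remark~\ref{s8} reduce everything to computing $|\w(v_0,l)|=|N_l(v_0)|\,|\Lambda^{z_0}_{v_0}(l)|$ for $l=1,2,3$ (and the $v_1$ analogues). The factors $|N_l(v_0)|$ are already given explicitly; for $l=3$ this is $(c-2)(d-1)^{\lceil 1\rceil}(c-1)^{\lfloor 1\rfloor}=(c-2)(d-1)$ when $d(v_0)=c$, which is the source of the $(c-2)(d-1)+(d-2)(c-1)$ terms. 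The remaining work is to compute $|\Lambda^{z_0}_{v_0}(l)|$ via Corollary~\ref{s15} and Lemma~\ref{s9}: for $l=3$ one has $k-l=0$, so $|\Lambda^{z_0}_{v_0}(3)|=1$; for $l=2$ one sums over compositions of $k-l=1$ among $g+2l+1=g+5$ slots, giving a linear expression in the $t$'s and one $b$-term; for $l=1$ one sums over compositions of $k-l=2$ among $g+3$ slots, reusing the slot-counting bookkeeping (with $\frac{g}{2}+1$ vertices of each parity class along the extended path-plus-cycle) that appeared in Case~1.2 of Theorem~\ref{s10}'s proof, which accounts for the $\binom{g/2+1}{2}$ and $(\frac{g}{2}+1)^2$ coefficients. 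Assembling $\sum_{l=1}^3 |N_l(v_0)|\,|\Lambda^{z_0}_{v_0}(l)|$ plus the $v_1$ part and substituting the small $t$- and $b$-values yields the second displayed formula.

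The main obstacle is purely bookkeeping: keeping the degree-type case analysis consistent when distributing the nonzero $s_j$'s over the $c$-vertices versus the $d$-vertices, and not double-counting or mislabeling the degenerate configurations where a nonzero index lands in the last slot $s_g$ (for $\Phi$) or $s_{i+2l}$ (for $\Lambda$), whose weight is a $b$ rather than a $t$ and whose degree is pinned by the parity of $l$. A secondary subtlety is verifying that every walk counted on the right-hand side of Lemma~\ref{s9} genuinely has $v_0$ as the first cycle-vertex encountered — the girth estimate $d(z_0,v_0)+d(z_0,v_j)+d(v_0,v_j)\le l+(k-l)+\frac{g}{2}<g$ from the proof of Lemma~\ref{s9} handles this, but one must confirm the inequality still has slack when $i=g$ and $k=3$, i.e. that $g+6<2g$, equivalently $g>6$; the small-girth cases $g\in\{4,6\}$ may need to be checked or excluded separately. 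Once the case analysis is organized cleanly, the rest is routine substitution and algebraic regrouping, and I would present only the final grouped expressions rather than every intermediate sum.
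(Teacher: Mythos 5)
Your plan is exactly the paper's proof: for $|\Phi(g,6)|$ the paper applies Corollary \ref{s4} with $k=3$ and enumerates the compositions of $3$ over the $g+1$ slots by shape (a single $3$; a $2$ plus a $1$; three $1$'s), splitting each shape according to whether $s_g=0$ or not; for $|\Lambda(g,6)|$ it uses Corollary \ref{s2}, Remark \ref{s8}, Corollary \ref{s15} and Lemma \ref{s9} with $l=1,2,3$, multiplying each $|\Lambda^{z_0}_{v_0}(l)|$ by $|N_l(v_0)|$ and adding the $v_1$ analogues — precisely your outline (the paper, like you, leaves the final substitution of the small $t$- and $b$-values implicit). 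Your observation that the standing hypothesis $i+2k<2g$ forces $g>6$ here is also consistent with the paper's framework.

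However, the concrete constants you propose to substitute are wrong and, carried out literally, would not reproduce the stated formulas. Counting directly in $\T_{c,d}$ and $\B_{c,d}$ gives $t_{c,4}=(c-1)^2+(c-1)(d-1)=(c-1)(c+d-2)$ and $b_{c,4}=c^2+c(d-1)=c(c+d-1)$, not $(c-1)(d-1)$ and $c(2d-1)$; these are exactly the values needed to produce, e.g., the block $(2cd-2c-2d)\big[(c+d-1)+(\tfrac{g}{2}+1)(c+d-2)\big]$ in $|\Lambda(g,6)|$ via $(c-2)b_{d,4}+(d-2)b_{c,4}$ and $(\tfrac{g}{2}+1)(c+d-2)\big[(c-2)b_{d,2}+(d-2)b_{c,2}\big]$. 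You have also mixed up the $l=2$ and $l=3$ contributions: $|N_2(v_0)|=(c-2)(d-1)$ (for $l=2$ one has $\lceil\tfrac{l-1}{2}\rceil=1$, $\lfloor\tfrac{l-1}{2}\rfloor=0$), and together with $|\Lambda^{z_0}_{v_0}(2)|=(\tfrac{g}{2}+2)(c+d-2)+c$ this is what yields the terms $(\tfrac{g}{2}+2)(c+d-2)\big[(c-2)(d-1)+(d-2)(c-1)\big]$ and $c(c-2)(d-1)+d(d-2)(c-1)$; whereas $|N_3(v_0)|=(c-2)(d-1)(c-1)$ (both exponents equal $1$), and with $|\Lambda^{z_0}_{v_0}(3)|=1$ the $l=3$ contribution is $(c+d-4)(c-1)(d-1)$, which is the ``$+1$'' inside $\big((\tfrac{g}{2}+1)^2+1\big)(c-1)(d-1)$. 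With these corrections your computation collapses to the paper's.
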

\begin{proof}
To enumerate the number of CWDCCs in this case, we first investigate $|\Phi(g,6)|.$ From the Corollary \ref{s4}, we have 
$$|\Phi(g,6)|= g N_{g} \sum\limits_{{s_0} +  \ldots  + {\rm{ }}{s_{g}} = 3} 
{(b_{ c,2s_{ g } }+ b_{ d,2s_{ g} })}  
\prod\limits_{j =0}^{\frac{g}{2}-1} {t_{ d,2s_{2 j} } t_{ c,2s_{2 j+1} }} 
.$$ 
In our proof, we avoid using $gN_g$ in our calculating. Now, consider the following three subcases:\\
\textbf{Case 1.} Each $s_j$ is zero or three. In the above summation, if $s_g=0,$ then the number of CWDCCs is 
$$2 \frac{g}{2} 
t_{c,6} + 2\frac{g}{2} 
t_{d,6} .$$
If $s_g =3,$ then the number of CWDCCs in this case is equal to:
$$  b_{c,6}+ b_{d,6}.$$\\
\textbf{Case 2.} Each $s_j$ is zero, one, or two. 
Depending on whether $s_{g}$ is zero or not, the number of CWDCCs is
$$ 2 \left({\begin{array}{*{20}{c}}
{\frac{g}{2}}\\
{2}
\end{array}}\right)(t_{c,2} t_{c,4}+ t_{d,2} t_{d,4}) +  2 \frac{g^2}{4} (t_{c,2} t_{d,4}+ t_{c,4} t_{d,2}),$$
and
$$ 
\frac{g}{2} 
\Big{(} (b_{c,4}+b_{d,4})(t_{c,2}+t_{d,2})+(b_{c,2}+b_{d,2}) (t_{c,4}+t_{d,4}) \Big{)},$$ 
respectively.
\\
\textbf{Case 3.} Each $s_j$ is zero or one. Again, suppose that $s_g=0,$ then we have three ones in distinct $s_j$'s. Hence, 
the number of CWDCCs is equal to
$$2\left({\begin{array}{*{20}{c}}
{\frac{g}{2}}\\
{3}
\end{array}}\right) (t^3_{c,2}+t^3_{d,2})+ 
2 \left({\begin{array}{*{20}{c}}
{\frac{g}{2}}\\
{2}
\end{array}}\right)
 \left({\begin{array}{*{20}{c}}
{\frac{g}{2}}\\
{1}
\end{array}}\right)
(t^2_{c,2} t_{d,2}+t_{c,2}t^2_{d,2}).
$$ 
If $s_g \neq 0,$ then $s_g=1$ and the number of CWDCCs is 
$$ (b_{c,2}+b_{d,2}) \Big{(}
\left({\begin{array}{*{20}{c}}
{\frac{g}{2}}\\
{2}
\end{array}}\right) (t^2_{c,2}+t^2_{d,2})
+
(\frac{g}{2})^2
t_{c,2} t_{d,2}
\Big{)}.$$
To complete the proof in this case, we find the value
$|\Lambda(g,6)|. $
By the corollary \ref{s2}, we have 
$$|\Lambda(g,6)|= g N_g \sum\limits_{l=1}^{3} {(|\w({v_0},l) |+|\w({v_1},l) |)}.
$$ 
It is enough to find $|\Lambda^{z_0}_{v_0}(l)|$ for $1 \leq l \leq 3.$ Hence, we consider the following three subcases:\\
\textbf{Case a.} Suppose that $l=1.$ Since $l$ is odd, Corollary \ref{s15} and Lemma \ref{s9} imply that 
$$ |\Lambda^{z_0}_{v_0}(1)|= 
\sum\limits_{{s_0} +  \ldots  + {s_{g+2}} = 2}{b_{ d,2s_{ g +2} } 
\prod\limits_{j =0}^{\frac{g}{2}} {t_{ d,2s_{2 j} } t_{ c,2s_{2 j+1} }}}.$$
If $s_j \in \lbrace 0,2 \rbrace,$ then the number of CWDCCs in this case is equal to:
$$(\frac{g}{2}+1)(t_{c,4}+t_{d,4})+b_{d,4}.$$ 
Now, suppose that $s_j \in \lbrace 0,1 \rbrace.$ Depending on whether $s_{g+2}$ is zero or not, the number of CWDCC's is 
$$ \left({\begin{array}{*{20}{c}}
{\frac{g}{2}+1}\\
{2}
\end{array}}\right) (t^2_{c,2}+t^2_{d,2})+
(\frac{g}{2}+1)^2 t_{c,2}t_{d,2}
,$$ 
and 
$$(\frac{g}{2}+1) b_{d,2} (t_{c,2}+t_{d,2}),$$ 
respectively. \\
\textbf{Case b.}
Suppose that $l=2.$ Since $l$ is even, we have 
$$ |\Lambda^{z_0}_{v_0}(2)|= 
\sum\limits_{{s_0} +  \ldots  + {s_{g+4}} = 1}{b_{ c,2s_{ g+4} } 
\prod\limits_{j =0}^{\frac{g}{2}+1} {t_{ c,2s_{2 j} } t_{ d,2s_{2 j+1} }}}.$$
Depending on $s_{j}$ is zero or not, we have the following number as the CWDCCs. 
$$ (\frac{g}{2}+2) (t_{c,2}+t_{d,2})+b_{c,2}.$$
\textbf{Case c.}
Assume that $l=3.$ Therefore, by Corollary \ref{s15} and Lemma \ref{s9} we have
$$ |\Lambda^{z_0}_{v_0}(3)|= 
\sum\limits_{{s_0} +  \ldots  + {s_{g+6}} = 0}{b_{ d,2s_{ g +6} } 
\prod\limits_{j =0}^{\frac{g}{2}+2} {t_{ d,2s_{2 j} } t_{ c,2s_{2 j+1} }}}=1.$$
\end{proof}
\begin{lemma}
Let $G$ be a $(c, d)$-regular graph. Then the number of CWWCs of length $g+6$ with cycle of length $g+2$ is equal to 
$$   \   \  \  \  \  \  \  \  \  \  \  \  \  \  \  \  (g+2) N_{g+2} [(g+2)(c+d-2)^2+(c+d)(c+d-1)]$$
$$ \  \  \  \  \  \  \  \   \  \  \  \  \  \  \  \   \  \  \  \  \  \  \  \  \  \  \  \   \  \  \  \  \  \  \   
+(g+2) N_{g+2} \Big{(}2 \left({\begin{array}{*{20}{c}}
{\frac{g+2}{2}}\\
{2}
\end{array}}\right) [(c-1)^2+(d-1)^2]+2(\frac{g+2}{2})^2 (c-1)(d-1) \Big{)}$$
$$  +(g+2) ( \frac{g+2}{2})N_{g+2}[(c+d)(c+d-2)]$$
$$ \  \  \  \  \  \  \  \   \  \  \  \  \  \  \  \   \  \  \  \  \  \  \  \  \  \  \  \  \  \  \  \  \  \  \   +(g+2) N_{g+2} \Big{(} (\frac{g+2}{2}+1) (c+d-2)(c+d-4)+(c-2 )(2d-1)+(d-2)(2c-1) \Big{)}.$$
\end{lemma}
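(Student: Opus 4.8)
The plan is to count these walks directly by specializing the machinery of Sections~3 and~4, in exact parallel with Case~1 of the proof of Theorem~\ref{s10}, but with the girth $g$ there replaced by $g+2$. First I would observe that a CWWC of length $g+6$ that traverses a cycle $\C$ of length $g+2$ is, by definition, an element of $\Phi(g+2,4)\cup\Lambda(g+2,4)$: here $i=g+2$ and $k=2$, and the inequality $i+2k=g+6<2g$ (which forces $g\ge 8$, the range in which the statement is meaningful) ensures that the definitions of $\Phi$, $\Lambda$ and the unique‑representation Remarks~\ref{s14} and~\ref{s13} apply verbatim. Hence the quantity to be computed is $|\Phi(g+2,4)|+|\Lambda(g+2,4)|$, and the two summands can be handled separately.

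For $|\Phi(g+2,4)|$ I would invoke Corollary~\ref{s4} with $i=g+2$, $k=2$, obtaining
$$|\Phi(g+2,4)|=(g+2)N_{g+2}\sum_{s_0+\cdots+s_{g+2}=2}(b_{c,2s_{g+2}}+b_{d,2s_{g+2}})\prod_{j=0}^{\frac{g+2}{2}-1}t_{d,2s_{2j}}t_{c,2s_{2j+1}}.$$
The partition sum breaks up exactly as in Case~1 of Theorem~\ref{s10}, according to whether the unique nonzero $s_j$ equals $2$ or the two nonzero $s_j$'s equal $1$, and according to whether $s_{g+2}=0$; the only change is that the cycle $\C$ now carries $\tfrac{g+2}{2}$ vertices of each degree, so the coefficients $g$, $2\binom{g/2}{2}$, $2(g/2)^2$, $\tfrac{g}{2}$ occurring there become $g+2$, $2\binom{(g+2)/2}{2}$, $2((g+2)/2)^2$, $\tfrac{g+2}{2}$. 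For $|\Lambda(g+2,4)|$ I would use Corollary~\ref{s2} (with $i=g+2$, $k=2$) together with Remark~\ref{s8}, reducing the task to $|\Lambda^{z_0}_{v_0}(1)|$ and $|\Lambda^{z_0}_{v_0}(2)|$; by Corollary~\ref{s15} and Lemma~\ref{s9} the former is the sum over $s_0+\cdots+s_{g+4}=1$ and the latter is the empty‑partition sum, hence $=1$. Again this is term‑for‑term the $|\Lambda(g,4)|$ computation of Theorem~\ref{s10} with $\tfrac{g}{2}\mapsto\tfrac{g+2}{2}$.

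It then remains to substitute the small walk‑counts and neighbourhood sizes, all immediate from the definitions of $\B_{c,d}$, $\T_{c,d}$ and the displayed formula for $|N_l(v_j)|$: $t_{c,2}=c-1$, $t_{d,2}=d-1$, $t_{c,4}=(c-1)(c+d-2)$, $t_{d,4}=(d-1)(c+d-2)$, $b_{c,2}=c$, $b_{d,2}=d$, $b_{c,4}=c(c+d-1)$, $b_{d,4}=d(c+d-1)$, $|N_1(v_0)|=c-2$, $|N_1(v_1)|=d-2$, $|N_2(v_0)|=(c-2)(d-1)$, $|N_2(v_1)|=(d-2)(c-1)$. Collecting terms and using the same simplifications as in Theorem~\ref{s10} — $t_{c,4}+t_{d,4}=(c+d-2)^2$, $b_{c,4}+b_{d,4}=(c+d)(c+d-1)$, $t_{c,2}^2+t_{d,2}^2=(c-1)^2+(d-1)^2$, $(c-2)b_{d,2}+(c-2)(d-1)=(c-2)(2d-1)$, and so on — turns $|\Phi(g+2,4)|+|\Lambda(g+2,4)|$ into the asserted expression. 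I expect no conceptual difficulty: the whole argument is a specialization of Sections~3–4, and the only laborious step is the case‑by‑case bookkeeping of the partition sums; the single genuinely new verification is that no cycle shorter than $g$ can arise in the $\Lambda$‑count, which is precisely where $g+6<2g$ is used, exactly as in the last paragraph of the proof of Lemma~\ref{s9}.
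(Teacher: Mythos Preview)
Your proposal is correct and follows essentially the same approach as the paper: the paper's own proof simply notes that $|\Phi(g,4)|$ and $|\Lambda(g,4)|$ were already computed in Case~1 of Theorem~\ref{s10}, and then writes down $|\Phi(g+2,4)|$ and $|\Lambda(g+2,4)|$ by replacing $g$ with $g+2$ throughout. You spell out this substitution in more detail (listing the explicit values $t_{c,2},t_{d,2},b_{c,2},\ldots$ and the simplifications), but the underlying argument is identical.
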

\begin{proof}
We already computed the values of $|\Phi(g,4)| $ and $ |\Lambda(g,4)|$ in the 
case $1$ of the proof of Theorem \ref{s10}. Hence, we have 
$$ \  \  \  \  \   \  \  \  \  \  \  \  \  \  \  \ |\Phi(g+2,4)|=(g+2) N_{g+2} \Big{(}(g+2)( 
t_{c,4} +  
t_{d,4})+( b_{c,4}+b_{d,4})+\\
 2 \left({\begin{array}{*{20}{c}}
{\frac{g+2}{2}}\\
{2}

\end{array}}\right) [t_{c,2}^2+t_{d,2}^2]
$$ 

$$ \\ \   \  \  \  \  \  \  \  \  \  +
2(\frac{g+2}{2})^2 t_{c,2}t_{d,2}+
\frac{g+2}{2}(b_{c,2}+b_{d,2})(t_{c,2}+t_{d,2}) \Big{)},
$$
and
$$ \  \  \  \  \  \ \  \  \  |\Lambda(g+2,4)|= (g+2) N_{g+2} \Big{(} (\frac{g+2}{2}+1) \big{(}|N_1(v_0)|+|N_1(v_1)| \big{)}(t_{c,2}+ t_{d,2})
$$
$$
 \  \  \  \  \ \  \  \  \  \  \  + |N_1(v_0)|b_{d,2}+ 
|N_1(v_1)|b_{c,2}+N_{2}(v_0)+N_{2}(v_1) 
\Big{)}.$$
\end{proof}
\begin{lemma}
Let $G$ be a $(c, d)$-regular graph. Then the number of CWWCs of length $g+6$ with cycle of length $g+4$ is equal to 
$$(g+4) N_{g+4} (g+6)(c+d-2).$$
\end{lemma}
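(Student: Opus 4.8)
The plan is to reduce this final statement to the already-established formula for $\Psi_{g+2}(c,d,\B_{c,d})$ from Theorem \ref{s5}, applied with the girth $g$ replaced by $g+4$. The key observation is that a closed walk with cycle of length $g+6$ whose embedded cycle has length $g+4$ is exactly a walk in $\Phi(g+4,2)\cup\Lambda(g+4,2)$: here $i=g+4$ and $i+2k=g+6$ forces $k=1$, which is admissible since $1\le k<g-\frac{i}{2}$ reads $1<g-\frac{g+4}{2}=\frac{g-4}{2}$ — wait, one must be slightly careful, so the first step is to check that the index range in Corollary \ref{s4} and Corollary \ref{s2} actually permits $k=1$ when $i=g+4$; this holds precisely when $g+6<2g$, i.e. $g>6$, and in the boundary/degenerate cases the count is vacuous or handled separately. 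Granting $k=1$, the number we want is $|\Phi(g+4,2)|+|\Lambda(g+4,2)|$.

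Next I would invoke Corollary \ref{s4} with $i=g+4$, $k=1$, giving
$$|\Phi(g+4,2)|=(g+4)N_{g+4}\sum_{s_0+\cdots+s_{g+4}=1}(b_{c,2s_{g+4}}+b_{d,2s_{g+4}})\prod_{j=0}^{\frac{g+4}{2}-1}t_{d,2s_{2j}}t_{c,2s_{2j+1}},$$
and then evaluate the sum exactly as in the proof of Theorem \ref{s5}: there are $\frac{g+4}{2}$ vertices of each degree on $\C$, so the $s_{g+4}=0$ terms contribute $(g+4)(t_{c,2}+t_{d,2})$ and the $s_{g+4}=1$ term contributes $b_{c,2}+b_{d,2}$, whence $|\Phi(g+4,2)|=(g+4)N_{g+4}\big((g+4)(t_{c,2}+t_{d,2})+b_{c,2}+b_{d,2}\big)$. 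Similarly, Corollary \ref{s2} with $k=1$ gives $|\Lambda(g+4,2)|=(g+4)N_{g+4}\big(|\w(v_0,1)|+|\w(v_1,1)|\big)$, and since $|\Lambda^{z_0}_{v_0}(1)|=1$ (the sum runs over $s_0+\cdots=k-l=0$) we get $|\w(v_j,1)|=|N_1(v_j)|$, so $|\Lambda(g+4,2)|=(g+4)N_{g+4}\big(|N_1(v_0)|+|N_1(v_1)|\big)$.

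Finally I would substitute the known values. From the formulas for $|N_l(v_j)|$ following the definition of $\w(v_j,l)$, with $l=1$ the ceiling and floor exponents vanish, giving $|N_1(v_0)|=c-2$ and $|N_1(v_1)|=d-2$. Using $a_{c,2}=c\,t_{d,0}=c$ and $a_{d,2}=d\,t_{c,0}=d$ together with the standard first-step identities $t_{c,2}=c-1$, $t_{d,2}=d-1$, $b_{c,2}=c$, $b_{d,2}=d$ for a graph of girth exceeding $4$ (which holds here), the bracketed quantity becomes
$$(g+4)\big((c-1)+(d-1)\big)+(c+d)+\big((c-2)+(d-2)\big)=(g+6)(c+d-2),$$
so that $\Psi$-contribution of this case equals $(g+4)N_{g+4}(g+6)(c+d-2)$, as claimed. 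The only real obstacle is bookkeeping: one must be sure the relabeled girth $g+4$ does not violate the hypothesis $i<2g$ needed for Theorem \ref{s5}-type reasoning (equivalently $g>6$), and that the same cycle-free walk counts $t$, $b$ computed at girth $g$ remain valid for the relevant short lengths $2$; both follow because all walks involved have length $\le 6<2g$ and the structure of $\B_{c,d}$ up to that radius is tree-like. Everything else is the routine partition-counting already carried out for $\Psi_{g+2}$.
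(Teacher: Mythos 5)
Your proposal is correct and follows essentially the same route as the paper: both reduce the count to $|\Phi(g+4,2)|+|\Lambda(g+4,2)|$ with $k=1$ and reuse the computation from Theorem \ref{s5} with the cycle length $g$ replaced by $g+4$. You additionally carry out the final substitution $t_{c,2}=c-1$, $t_{d,2}=d-1$, $b_{c,2}=c$, $b_{d,2}=d$, $|N_1(v_0)|=c-2$, $|N_1(v_1)|=d-2$ to reach $(g+6)(c+d-2)$, a step the paper leaves implicit, and your remark that $g>6$ is needed is consistent with the standing hypothesis $i+2k<2g$.
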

\begin{proof}
Since the values $|\Phi(g,2)| $ and $ |\Lambda(g,2)|$ are known by the proof of Theorem \ref{s5}. Thus, 
$$ \  \  \  \  \  \  \  \  \  \  \  \  \  \  |\Phi(g+4,2)|= (g+4) N_{g+4} \Big{(}(g+4)(t_{c,2}+t_{d,2}) +b_{c,2}+b_{d,2} \Big{)},$$ 
and 
$$|\Lambda(g+4,2)|=(g+4) N_{g+4} \Big{(}|N_1(v_0)|+|N_1(v_1)|\Big{)}.$$
\end{proof}

\end{document}